\journal{}
\newcommand{\eps}{\varepsilon}
\newcommand{\set}[1]{\left\{#1\right\}}
\newcommand{\p}{\partial}
\newcommand{\mc}{\mathbf{c}}
\newcommand{\mn}{\mathbf{n}}
\newcommand{\mt}{\mathbf{t}}
\newcommand{\mx}{\mathbf{x}}
\newcommand{\my}{\mathbf{y}}
\newcommand{\mz}{\mathbf{z}}
\newcommand{\mU}{\mathbf{U}}
\newcommand{\mV}{\mathbf{V}}
\newcommand{\mW}{\mathbf{W}}
\newcommand{\E}{\mathbb{E}}
\newcommand{\W}{\mathbb{W}}
\newcommand{\vt}{\boldsymbol{\theta}}
\newcommand{\vv}{\boldsymbol{\vartheta}}
\newtheorem{thm}{Theorem}[section]
\newtheorem{cor}[thm]{Corollary}
\newtheorem{lem}[thm]{Lemma}
\begin{document}

\begin{frontmatter}



\title{Analysis of multi-frequency subspace migration weighted by natural logarithmic function for fast imaging of two-dimensional thin, arc-like electromagnetic inhomogeneities}


\author{Young-Deuk Joh}
\ead{mathea421@kookmin.ac.kr}
\author{Won-Kwang Park\corref{Parkwk}}
\ead{parkwk@kookmin.ac.kr}
\address{Department of Mathematics, Kookmin University, Seoul, 136-702, Korea.}
\cortext[Parkwk]{Tel: +82 2 910 5748; fax: +82 2 910 4739.}

\begin{abstract}
The present study seeks to investigate mathematical structures of a multi-frequency subspace migration weighted by the natural logarithmic function for imaging of thin electromagnetic inhomogeneities from measured far-field pattern. To this end, we designed the algorithm and calculated the indefinite integration of square of Bessel function of order zero of the first kind multiplied by the natural logarithmic function. This is needed for mathematical analysis of improved algorithm to demonstrate the reason why proposed multi-frequency subspace migration contributes to yielding better imaging performance, compared to previously suggested subspace migration algorithm. This analysis is based on the fact that the singular vectors of the collected Multi-Static Response (MSR) matrix whose elements are the measured far-field pattern can be represented as an asymptotic expansion formula in the presence of such inhomogeneities. To support the main research results, several numerical experiments with noisy data are illustrated.
\end{abstract}

\begin{keyword}
multi-frequency subspace migration \sep weighted by natural logarithmic function \sep thin electromagnetic inhomogeneities \sep Multi-Static Response (MSR) matrix \sep numerical experiments.



\end{keyword}

\end{frontmatter}





\section{Introduction}\label{Sec1}
The inverse scattering problem, non-destructive evaluation, is one of the intriguing research topics since it is closely related to human
life. It is because it applies not only to physics, engineering, or image medical science but also to identifying the cracks of the structures such as concrete walls, machines, or buildings. Therefore, it has been considerably investigated by many researchers to suggest the algorithm regarding this problem or to experiment and analyze previously suggested algorithms. Related works can be found in \cite{AWSH,ABF,A,CI,CIN,FKM,GKL,I,IC,ICK,IZSK,KWYS,KYSWC,SKAW,TLRD} and references therein.

However, the inverse scattering problem is such a difficult problem that not many methods have been studied other than the reconstruction method based on the iterative method such as Newton-type method, refer to \cite{ADIM,BHL,BHR,DL,PL4,S,VXB}. Regarding the algorithms of using a Newton-type method, in case the initial shape is quite different from the unknown target, the reconstruction of material leads to failure with the non-convergence or yielding faulty shapes. Even though the reconstruction ends up with a successful result, it could take a great deal of time. Therefore, several non-iterative algorithms have been suggested because they can reconstruct the shape that is quite similar to the target, and thus it can be used as a good initial guess, which also takes a short time and efficient in iterative methods (see \cite{AIL,AKKLV,AKLP,CA,CZ,CK,G,PL1,PL2,PL3,ZC} and references therein).

Among them, the non-iterative reconstruction algorithm such as Kirchhoff and subspace migration has been consistently studied thanks to its better imaging product. However, the existing research on this algorithm has been applied heuristically. That is, it mostly relied on experimental results. Although research on several structures was conducted, it was based on experiments or statistical approach, not revealing the mathematical structures explicitly, refer to \cite{AGKPS,BPT,D,DJRBSM,HHSZ,MGAD,P2,P3,PP,SSS} and references therein. It resulted in the difficulties of explaining the results theoretically.

Recently, an analysis of mathematical structure of single- and multi-frequency subspace migration for imaging of small electromagnetic materials has been conducted by establishing a relationship with Bessel function of integer order in full-view inverse scattering
problems, refer to \cite{JKHP1}. This remarkable research has shown the reason why subspace migration is effective and why
application of multi-frequency guarantees better imaging performance than application of single-frequency. Motivated by this work, this
analysis is successfully extended to the limited-view inverse scattering problems (see \cite{KP}).

Afterwards, a multi-frequency subspace migration weighted by applied frequency has been suggested in \cite{JKHP2} to obtain more precise results. Furthermore, the reason why the suggested algorithm presents better imaging products was demonstrated mathematically and an
analysis of multi-frequency subspace migration weighted by the power of applied frequency has been considered in \cite{P1}. This research
concludes that increasing the power of applied frequency is meaningless, so that multi-frequency subspace migration weighted by applied
frequency is a good algorithm for imaging.

Recently, it has been confirmed that a multi-frequency subspace migration weighted by the logarithmic function of applied frequency can yield more appropriate imaging result than the one suggested in \cite{JP}. However, one can face difficulties in identifying the reason why it shows the better performance through the mathematical analysis. Motivated by this difficulty, we derive an indefinite integration of square of Bessel function of order zero of the first kind multiplied by the natural logarithmic function. Based on this integration, we
discover the structure of multi-frequency subspace migration weighted by the logarithmic function of applied frequency by establishing a
relationship with Bessel function of integer order of the first kind, and provide the reason of better imaging performance.

The organization of this study is as follows. In Section \ref{Sec2}, we briefly introduce two-dimensional direct scattering problem and
subspace migration. Section \ref{Sec3} provides a survey on the structures of single-, multi-, and weighted multi-frequency subspace
migrations, the derivation of indefinite integration of square of Bessel function multiplied by the natural logarithmic function, and the
the mathematical analysis on why multi-frequency subspace migration weighted by natural logarithmic function shows better imaging performance than the traditional one. In Section \ref{Sec4}, several results of numerical experiments with noisy data are presented in order to support our analysis. Finally, a short conclusion is mentioned in Section \ref{Sec5}.

\section{Two-dimensional direct scattering problem and subspace migration}\label{Sec2}
\subsection{Direct scattering problem and asymptotic expansion formula}
Let us consider two-dimensional electromagnetic scattering from a thin, curve-like homogeneous inclusion within a homogeneous space
$\mathbb{R}^2$. The latter contains an inclusion denoted as $\Gamma$ which is localized in the neighborhood of a curve $\sigma$. That is,
\begin{equation}\label{TI}
\Gamma=\set{\mx+\eta\mn(\mx):\mx\in\sigma,~\eta\in(-h,h)},
\end{equation}
where the supporting $\sigma$ is a simple, smooth curve in $\mathbb{R}^2$, $\mn(\mx)$ is the unit normal to $\sigma$ at $\mx$, and $h$ is a
strictly positive constant which specifies the thickness of the inclusion (small with respect to the wavelength), refer to figure
\ref{ThinInclusion}. Throughout this paper, we denote $\mt(\mx)$ be the unit tangent vector at $\mx\in\sigma$.

\begin{figure}
\begin{center}
\includegraphics[width=0.35\textwidth,keepaspectratio=true,angle=0]{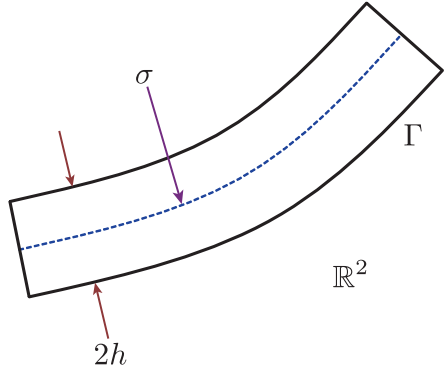}
\caption{\label{ThinInclusion}Sketch of the thin inclusion $\Gamma$ in two-dimensional space $\mathbb{R}^2$.}
\end{center}
\end{figure}

In this paper, we assume that every material is characterized by its dielectric permittivity and magnetic permeability at a given
frequency. Let $0<\eps_0<+\infty$ and $0<\mu_0<+\infty$ denote the permittivity and permeability of the embedding space $\mathbb{R}^2$, and
$0<\eps<+\infty$ and $0<\mu<+\infty$ the ones of the inclusion $\Gamma$. Then, we can define the following piecewise constant dielectric
permittivity
\begin{equation}\label{EP}
\eps(\mx)=\left\{\begin{array}{ccl}
\eps_0&\mbox{for}&\mx\in\mathbb{R}^2\backslash\overline{\Gamma}\\
\eps&\mbox{for}&\mx\in\Gamma
\end{array}\right.
\end{equation}
and magnetic permeability
\begin{equation}\label{MP}
\mu(\mx)=\left\{\begin{array}{ccl}
\mu_0&\mbox{for}&\mx\in\mathbb{R}^2\backslash\overline{\Gamma}\\
\mu&\mbox{for}&\mx\in\Gamma,
\end{array}\right.
\end{equation}
respectively. Note that if there is no inclusion, i.e., in the homogeneous space, $\mu(\mx)$ and $\eps(\mx)$ are equal to $\mu_0$ and
$\eps_0$ respectively. In this paper, we set $\eps>\eps_0=1$ and $\mu>\mu_0=1$ for convenience.

At strictly positive operation frequency $\omega$ (wavenumber $k_0=\omega\sqrt{\eps_0\mu_0}=\omega$), let $u_{\mathrm{tot}}^{(l)}(\mx;\omega)$ be the time-harmonic total field which satisfies the Helmholtz equation
\begin{equation}\label{HE1}
\nabla\cdot\left(\frac{1}{\mu(\mx)}\nabla u_{\mathrm{ tot}}^{(l)}(\mx;\omega)\right)+\omega^2\eps(\mx)u_{\mathrm{
tot}}^{(l)}(\mx;\omega)=0\quad\mbox{in}\quad\mathbb{R}^2.
\end{equation}
Similarly, the incident field $u_{\mathrm{back}}^{(l)}(\mx;\omega)$ satisfies the homogeneous Helmholtz equation
\[\nabla\cdot\left(\frac{1}{\mu_0}\nabla u_{\mathrm{back}}^{(l)}(\mx;\omega)\right)+\omega^2
\eps_0u_{\mathrm{back}}^{(l)}(\mx;\omega)=0\quad\mbox{in}\quad\mathbb{R}^2.\]
As is usual, the total field $u_{\mathrm{tot}}^{(l)}(\mx;\omega)$ divides itself into the incident field $u_{\mathrm{back}}^{(l)}(\mx;\omega)$ and the scattered field $u_s$, $u=u_0+u_s$. Notice that this unknown
scattered field $u_{\mathrm{scat}}^{(l)}(\mx;\omega)$ satisfies the Sommerfeld radiation condition
\[\lim_{|\mx|\to\infty}\sqrt{|\mx|}\left(\frac{\p u_{\mathrm{ scat}}^{(l)}(\mx;\omega)}{\p|\mx|}-i\omega
u_{\mathrm{scat}}^{(l)}(\mx;\omega)\right)=0\]
uniformly in all directions $\hat{\mx}=\mx/|\mx|$.

In this paper, we consider the illumination of plane waves
\[u_{\mathrm{back}}^{(l)}(\mx;\omega)=e^{i\omega\vt_l\cdot\mx}\quad\mbox{for}\quad x\in\mathbb{R}^2\]
and far fields in free space where $\vt_l$ is a two-dimensional vector on the unit circle $\mathbb{S}^1$ in $\mathbb{R}^2$. For convenience, we denote $\set{\vv_j:j=1,2,\cdots,N}\subset\mathbb{S}^1$ to be a discrete finite set of observation directions and
$\set{\vt_l:l=1,2,\cdots,N}\subset\mathbb{S}^1$ be the set of incident directions.

The far-field pattern is defined as a function $u_{\infty}(\vv,\vt;\omega)$ which satisfies \[u_{\infty}(\vv,\vt;\omega)=\frac{e^{i\omega|\my|}}{\sqrt{|\my|}}u_{\mathrm{scat}}^{(l)}(\mx;\omega)+o\left(\frac{1}{\sqrt{|\my|}}\right)\]
as $|\my|\longrightarrow\infty$ uniformly on $\vv=\my/|\my|\in\mathbb{S}^1$ and $\vt\in\mathbb{S}^1$. Then, based on \cite{BF},
$u_{\infty}(\vv,\vt;\omega)$ can be written as an asymptotic expansion formula.

\begin{lem}[See \cite{BF}]
  For $\vv,\vt\in\mathbb{S}^1$ and $\mx\in\mathbb{R}^2\backslash\overline{\Gamma}$, the far-field pattern $u_{\infty}^{(l)}(\mx,\vt;\omega)$
  can be represented as
  \[u_{\infty}(\vv,\vt;\omega)=h\frac{\omega^2(1+i)}{4\sqrt{\omega\pi}}\int_{\sigma}\bigg((\eps-1)-\vv\cdot\mathbb{M}(\my)\cdot\vt\bigg)e^{i\omega(\vt-\vv)\cdot\my}d\sigma(\my),\]
  where $o(h)$ is uniform in $\my\in\sigma$, $\vv,\vt\in\mathbb{S}^1$, and $\mathbb{M}(\my)$ is a $2\times2$ symmetric matrix defined as
  follows: let $\mt(\my)$ and $\mn(\my)$ denote unit tangent and normal vectors to $\sigma$ at $\my$, respectively. Then
    \begin{itemize}
      \item $\mathbb{M}(\my)$ has eigenvectors $\mt(\my)$ and $\mn(\my)$.
      \item The eigenvalue corresponding to $\mt(\my)$ is $2\left(\frac{1}{\mu}-\frac{1}{\mu_0}\right)=2\left(\frac{1}{\mu}-1\right)$.
      \item The eigenvalue corresponding to $\mn(\my)$ is $2\left(\frac{1}{\mu_0}-\frac{\mu}{\mu_0^2}\right)=2(1-\mu)$.
    \end{itemize}
\end{lem}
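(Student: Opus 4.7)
The plan is to derive the stated formula from the standard Lippmann--Schwinger integral representation of the scattered field, treating the thickness $h$ as a small parameter. First, I would write $u_{\mathrm{scat}}^{(l)}(\mx;\omega)$ as an area integral over $\Gamma$ involving the two-dimensional free-space fundamental solution $\Phi(\mx,\my;\omega)=(i/4)H_0^{(1)}(\omega|\mx-\my|)$, with the source decomposing into a ``permittivity'' contribution proportional to $(\eps-1)u_{\mathrm{tot}}^{(l)}$ and a ``permeability'' contribution involving the contrast $(1/\mu-1)$ acting on $\nabla u_{\mathrm{tot}}^{(l)}$. Uniqueness of this representation is supplied by the Sommerfeld radiation condition stated just before the lemma.

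Next, I would exploit the tubular parametrization $\my=\mx+\eta\mn(\mx)$, $\mx\in\sigma$, $\eta\in(-h,h)$ from \eqref{TI} to reduce the area integral over $\Gamma$ to a line integral over $\sigma$ with an extra factor of order $h$ from the transverse variable. Inside $\Gamma$, the total field admits a two-scale expansion: to leading order the scalar field equals the incident plane wave $e^{i\omega\vt_l\cdot\my}$, while $\nabla u_{\mathrm{tot}}^{(l)}$ obeys the transmission conditions across $\sigma$. Solving the resulting one-dimensional transmission problem in the transverse direction produces the polarization tensor $\mathbb{M}(\my)$: continuity of the tangential component of the field yields the eigenvalue $2(1/\mu-1)$ on $\mt(\my)$, whereas the normal component is rescaled by the permeability ratio and yields the eigenvalue $2(1-\mu)$ on $\mn(\my)$.

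Finally, I would pass to the far-field by substituting the large-argument asymptotic expansion of the Hankel function into $\Phi(\mx,\my;\omega)$, which produces the overall prefactor $h\,\omega^2(1+i)/(4\sqrt{\omega\pi})$ together with the oscillating kernel $e^{i\omega(\vt-\vv)\cdot\my}$ once the interior expansion $u_{\mathrm{tot}}^{(l)}(\my;\omega)=e^{i\omega\vt\cdot\my}+O(h)$ on $\sigma$ is inserted. The two contrast contributions then combine into the integrand $(\eps-1)-\vv\cdot\mathbb{M}(\my)\cdot\vt$, with a remainder of order $o(h)$ uniformly in $\my\in\sigma$ and $\vv,\vt\in\mathbb{S}^1$.

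The hard part will be justifying the interior expansion rigorously and identifying the polarization tensor: one must show that in the thin-inclusion limit the Helmholtz transmission problem inside $\Gamma$ decouples into a purely geometric cross-sectional problem whose explicit solution yields the eigenstructure of $\mathbb{M}(\my)$, and that the associated error terms are $o(h)$ uniformly in all relevant parameters including the frequency $\omega$. For this step I would invoke the layer-potential techniques and energy estimates of \cite{BF}, adapted to the present frequency-dependent scaling.
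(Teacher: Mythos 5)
The paper offers no proof of this lemma at all: it is imported verbatim from the cited reference \cite{BF} (Beretta--Francini), so there is no in-paper argument to compare yours against. Your outline is, in substance, exactly the derivation used in that reference: Lippmann--Schwinger representation with the contrast split into a $(\eps-1)u_{\mathrm{tot}}$ source and a $(1/\mu-1)\nabla u_{\mathrm{tot}}$ dipole source, reduction of the area integral over $\Gamma$ to a line integral over $\sigma$ with the factor $h$ coming from the transverse variable, the cross-sectional transmission problem producing the polarization tensor (tangential gradient continuous gives $2(1/\mu-1)$; normal flux continuity rescales the normal derivative by $\mu/\mu_0$ and gives $2(1-\mu)$), and the large-argument Hankel asymptotics $\frac{i}{4}H_0^{(1)}(\omega|\mx-\my|)\sim\frac{1+i}{4\sqrt{\pi\omega|\mx|}}e^{i\omega|\mx|}e^{-i\omega\vv\cdot\my}$ supplying the prefactor $\omega^2(1+i)/(4\sqrt{\omega\pi})$ and the kernel $e^{i\omega(\vt-\vv)\cdot\my}$. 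All of these ingredients check out, including the sign of the $-\vv\cdot\mathbb{M}(\my)\cdot\vt$ term. The only caveat is the one you flag yourself: the genuinely hard step --- proving that the interior two-scale expansion is valid with an $o(h)$ remainder uniform in $\my\in\sigma$ and $\vv,\vt\in\mathbb{S}^1$ --- is not carried out but delegated to the layer-potential and energy estimates of \cite{BF}. Since the paper under review does precisely the same delegation (it states the lemma with ``See \cite{BF}'' and no argument), your proposal is at least as complete as the paper's treatment, and correctly identifies where the real analytic work lives.
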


\subsection{Introduction to subspace migration}
Now, we introduce subspace migration for imaging of thin inclusion $\Gamma$. Detailed description can be found in \cite{ABC,AGKPS,P1,P3,PL2}. In
order to introduce, we generate a Multi-Static Response (MSR) matrix $\mathbb{K}(\omega)=[K_{jl}(\omega)]_{j,l=1}^{N}\in\mathbb{C}^{N\times
N}$ whose element $K_{jl}=u_{\infty}(\vv_j,\vt_l;\omega)$ is the collected far-field at observation number $j$ for the incident number $l$.
In this paper, we assume that $\vv_j=-\vt_j$, i.e., we have the same incident and observation directions configuration. It is worth
emphasizing that for a given frequency $\omega=\frac{2\pi}{\lambda}$, based on the resolution limit, any detail less than one-half of the
wavelength cannot be retrieved. Hence, if we divide thin inclusion $\Gamma$ into $M$ different segments of size of order
$\frac{\lambda}{2}$, only one point, say, $\my_m$, $m=1,2,\cdots,M$, at each segment will affect the imaging (see \cite{ABC,AGKPS,PL1,PL3}).
If $M<N$, the elements of MSR matrix can be represented as follows:
\begin{align}
\begin{aligned}\label{ElementofMSR}
  u_{\infty}(\vv_j,\vt_l;\omega)=&h\frac{\omega^2(1+i)}{4\sqrt{\omega\pi}}\int_{\sigma}\bigg((\eps-1)-\vv\cdot\mathbb{M}(\my)\cdot\vt\bigg)e^{i\omega(\vt-\vv)\cdot\my}d\sigma(\my)\\
  =&h\frac{\omega^2(1+i)}{4\sqrt{\omega\pi}}\frac{|\sigma|}{M}\sum_{m=1}^{M}\bigg[(\eps-1)+2\left(\frac{1}{\mu}-1\right)
  \vt_j\cdot\mt(\my_m)\vt_l\cdot\mt(\my_m)\\
  &+2\left(1-\mu\right)\vt_j\cdot\mn(\my_m)\vt_l\cdot\mn(\my_m)\bigg]e^{i\omega(\vt_j+\vt_l)\cdot\my_m},
\end{aligned}
\end{align}
where $|\sigma|$ denotes the length of $\sigma$.

Now, let us perform the Singular Value Decomposition (SVD) of $\mathbb{K}(\omega)$
\[\mathbb{K}(\omega)=\mathbb{U}(\omega)\mathbb{S}(\omega)\overline{\mathbb{V}}(\omega)^T=\sum_{m=1}^{N}\rho_m(\omega)\mU_m(\omega)\overline{\mV}_m(\omega)^T\approx\sum_{m=1}^{M}\rho_m(\omega)\mU_m(\omega)\overline{\mV}_m(\omega)^T,\]
where $\rho_m(\omega)$, $m=1,2,\cdots,N$ nonzero singular values such that
\[\rho_1(\omega)\geq\rho_2(\omega)\geq\cdots\geq\rho_M(\omega)>0\quad\mbox{and}\quad\rho_m(\omega)\approx0\quad\mbox{for}\quad
m=M+1,M+2,\cdots,N,\]
and $\mU_m(\omega)$ and $\mV_m(\my_m;\omega)$ are left- and right-singular vectors of $\mathbb{K}(\omega)$, respectively.

Based on the structure of (\ref{ElementofMSR}), define a vector $\mW(\mz;\omega)\in\mathbb{C}^{N\times1}$ as
\begin{equation}\label{VecW}
  \mW(\mz;\omega)=\bigg[\mc\cdot[1,\vt_1]e^{i\omega\vt_1\cdot\mz},\mc\cdot[1,\vt_2]e^{i\omega\vt_2\cdot\mz},\cdots,
  \mc\cdot[1,\vt_N]e^{i\omega\vt_N\cdot\mz}\bigg]^T,
\end{equation}
where the selection of $\mc\in\mathbb{R}^3\backslash\set{\mathbf{0}}$ depends on the shape of the supporting curve $\sigma)$ (see
\cite[Section 4.3.1]{PL3} for a detailed discussion). Then, in accordance with \cite{AGKPS},
\[\mW(\my_m;\omega)=e^{i\gamma_m^1}\mU_m(\omega)\quad\mbox{and}\quad\mW(\my_m;\omega)=e^{i\gamma_m^2}\overline{\mV}_m(\omega)\]
for some $\gamma_m^1$ and $\gamma_m^2$, $m=1,2,\cdots,M$. Based on the orthonormal property of singular vectors, the first $M$ columns of
$\mathbb{U}(\omega)$ and $\mathbb{V}(\omega)$ are orthonormal, it follows that
\begin{align}
\begin{aligned}\label{orthonormal}
  &\langle\mW(\mz;\omega),\mU_m(\omega)\rangle\approx1,\quad\langle\mW(\mz;\omega),\overline{\mV}_m(\omega)\rangle\approx1\quad\mbox{if}\quad\mz=\my_m\\
  &\langle\mW(\mz;\omega),\mU_m(\omega)\rangle\approx0,\quad\langle\mW(\mz;\omega),\overline{\mV}_m(\omega)\rangle\approx0\quad\mbox{if}\quad\mz\ne\my_m,
\end{aligned}
\end{align}
where $\langle\mathbf{a},\mathbf{b}\rangle=\overline{\mathbf{a}}\cdot\mathbf{b}$.

Hence, we can introduce subspace migration for imaging of thin inclusion at a given frequency $\omega$ as
\begin{equation}\label{SingleSM}
  \W_{\mathrm{SF}}(\mz;\omega):=\left|\sum_{m=1}^{M}\langle\mW(\mz;\omega),\mU_m(\omega)\rangle\langle\mW(\mz;\omega),\overline{\mV}_m(\omega)\rangle\right|.
\end{equation}
Based on the properties (\ref{orthonormal}), map of $\W_{\mathrm{SF}}(\mz;\omega)$ should exhibit peaks of magnitude $1$ at
$\mz=\my_m\in\sigma$, and of small magnitude at $\mz\in\mathbb{R}^2\backslash\overline{\Gamma}$.

\section{Multi-frequency subspace migration weighted by natural logarithm function}\label{Sec3}
\subsection{Structure of single- and multi-frequency subspace migration}\label{Sec3-1}
In recent work \cite{JKHP1}, the structure of $\W_{\mathrm{SF}}(\mz;\omega)$ is derived as follows. Throughout this paper, we assume that
the set of incident (and also observation) directions $\{\vt_j:j=1,2,\cdots,N\}$ spans $\mathbb{S}^1$, and $J_n(x)$ be the \textit{Bessel
function of integer order $n$ of the first kind}.

\begin{lem}[See \cite{JKHP1}]
  If the total number of incident and observation directions $N$ is sufficiently large and satisfies $N>3M$. Then single-frequency subspace
  migration (\ref{SingleSM}) can be represented as follows:
  \[\W_{\mathrm{SF}}(\mz;\omega)\approx\sum_{m=1}^{M}J_0(\omega|\mz-\my_m|)^2.\]
\end{lem}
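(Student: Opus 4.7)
The plan is to reduce (\ref{SingleSM}) to a sum of squared Bessel functions in two stages: first to eliminate the abstract singular vectors in favour of the explicit test vector $\mW$, and then to convert the resulting sum over observation directions into an integral over $\mathbb{S}^1$ that is evaluated by the Jacobi--Anger expansion.

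Using the identifications $\mW(\my_m;\omega)=e^{i\gamma_m^1}\mU_m(\omega)=e^{i\gamma_m^2}\overline{\mV}_m(\omega)$ stated just before (\ref{orthonormal}) (with the unit-norm normalisation $1/\|\mW(\my_m;\omega)\|$ absorbed into the phase factors), each summand in (\ref{SingleSM}) rewrites as $e^{-i(\gamma_m^1+\gamma_m^2)}\langle\mW(\mz;\omega),\mW(\my_m;\omega)\rangle^2\big/\|\mW(\my_m;\omega)\|^2$. From (\ref{VecW}), the remaining inner product expands as
\[
\langle\mW(\mz;\omega),\mW(\my_m;\omega)\rangle=\sum_{j=1}^{N}|\mc\cdot[1,\vt_j]|^2\,e^{i\omega\vt_j\cdot(\my_m-\mz)},
\]
and since the $\vt_j$ uniformly sample $\mathbb{S}^1$, this Riemann sum is well approximated, for $N$ large, by
\[
\frac{N}{2\pi}\int_0^{2\pi}|\mc\cdot[1,\vt(\theta)]|^2\,e^{i\omega|\my_m-\mz|\cos(\theta-\phi_m)}\,d\theta,
\]
where $\phi_m$ denotes the polar angle of $\my_m-\mz$. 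The hypothesis $N>3M$ enters at this stage to ensure simultaneously that the signal subspace of $\mathbb{K}(\omega)$ genuinely has dimension $M$ and that the quadrature error on the degree-two trigonometric weight $|\mc\cdot[1,\vt(\theta)]|^2$ is negligible against the $M$ oscillatory exponentials.

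Next, I would write $|\mc\cdot[1,\vt(\theta)]|^2=a_0+a_1\cos\theta+b_1\sin\theta+a_2\cos 2\theta+b_2\sin 2\theta$ and apply Jacobi--Anger, $e^{ix\cos\psi}=\sum_{n\in\mathbb{Z}}i^n J_n(x)\,e^{in\psi}$, so that the integral collapses term-by-term to a finite linear combination of $J_0,J_1,J_2$ evaluated at $\omega|\mz-\my_m|$; the zeroth-order contribution $C\,J_0(\omega|\mz-\my_m|)$ dominates, with $C$ a nonzero constant depending only on $\mc$, and after squaring and restoring the normalisation this becomes $J_0(\omega|\mz-\my_m|)^2$ in leading order. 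Since each such $J_0^2$-peak is concentrated within roughly a half-wavelength of $\mz=\my_m$, at any fixed $\mz$ essentially only one summand is non-negligible, so the unimodular factors $e^{-i(\gamma_m^1+\gamma_m^2)}$ cannot produce destructive interference and the outer absolute value in (\ref{SingleSM}) passes through the sum, delivering the claimed formula.

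The main obstacle will be the quantitative control at the Fourier step: bounding the Riemann-sum error (for which the hypothesis $N>3M$ is precisely what is required) and justifying that the $J_1$ and $J_2$ corrections are genuinely negligible compared with the $J_0$ contribution, rather than merely smaller in amplitude. The phase-cancellation argument at the end is by contrast soft, being forced by the narrow support of each $J_0^2$-peak around its centre $\my_m$.
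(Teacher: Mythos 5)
The paper itself contains no proof of this lemma --- it is imported verbatim from \cite{JKHP1} --- but your sketch reproduces exactly the derivation used there (and invoked again inside the proof of Theorem \ref{StructureWMSM}): replace the singular vectors by the normalised test vectors $\mW(\my_m;\omega)$ up to phases, pass from the sum over the directions $\vt_j$ to an integral over $\mathbb{S}^1$, and apply the Jacobi--Anger expansion, retaining only the dominant $J_0$ contribution. So your approach is essentially the same as the source's; the only small correction is that the hypothesis $N>3M$ is imposed chiefly because each $\my_m$ contributes up to three independent columns to the MSR matrix (one from the permittivity contrast and two from the eigenvectors of $\mathbb{M}(\my_m)$), so that the signal subspace has dimension at most $3M$, rather than as a quadrature-error condition on the degree-two trigonometric weight.
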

This result tells us that although the shape of $\Gamma$ can be recognized via the map of $\W_{\mathrm{SF}}(\mz;\omega)$, some unexpected
artifacts should appear in the map of $\W_{\mathrm{SF}}(\mz;\omega)$ due to the oscillating property of Bessel function.

In order to obtain better result, a normalized multi-frequency subspace migration is considered. This is introduced as follows; for
multi-frequency $\set{\omega_f:f=1,2,\cdots,F}$, by an assumption of $M_f=M$ for all $f$, a normalized multi-frequency subspace migration
$\W_{\mathrm{MF}}(\mz;F)$ is given by
\begin{equation}\label{MultiSM}
  \W_{\mathrm{MF}}(\mz;F):=\left|\sum_{f=1}^{F}\sum_{m=1}^{M_f}\langle\mW(\mz;\omega_f),\mU_m(\omega_f)\rangle\langle\mW(\mz;\omega_f),\overline{\mV}_m(\omega_f)\rangle\right|,
\end{equation}
where $M_f$ is the number of nonzero singular values of MSR matrix $\mathbb{K}(\omega_f)$, $f=1,2,\cdots,F$. Then, the structure of
(\ref{MultiSM}) can be represented as follows.
\begin{lem}[See \cite{JKHP1}]
  If the total number of incident and observation directions $N$ is sufficiently large and satisfies $N>3M_f$ for $f=1,2,\cdots,F$. Then,
  multi-frequency subspace migration (\ref{MultiSM}) can be represented as follows:
  \begin{multline}\label{StructureMultiSM}
    \W_{\mathrm{MF}}(\mz;F)\approx
    F\sum_{m=1}^{M}\left\{\frac{\omega_F}{\omega_F-\omega_1}\bigg(J_0(\omega_F|\mz-\my_m|)^2+J_1(\omega_F|\mz-\my_m|)^2\bigg)\right.\\
    \left.-\frac{\omega_1}{\omega_F-\omega_1}\bigg(J_0(\omega_1|\mz-\my_m|)^2+J_1(\omega_1|\mz-\my_m|)^2\bigg)+\int_{\omega_1}^{\omega_F}J_1(\omega|\mz-\my_m|)^2d\omega\right\}.
  \end{multline}
\end{lem}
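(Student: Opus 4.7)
The plan is to bootstrap from the single-frequency structure lemma (cited immediately before this statement as a result from \cite{JKHP1}) and convert the frequency sum into a Riemann integral that can be evaluated in closed form via Bessel identities.

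First, I would argue that under the hypothesis $N>3M_f$ the single-frequency result applies at each $\omega_f$, so
\[
\sum_{m=1}^{M_f}\langle\mW(\mz;\omega_f),\mU_m(\omega_f)\rangle\langle\mW(\mz;\omega_f),\overline{\mV}_m(\omega_f)\rangle
\approx \sum_{m=1}^{M}J_0(\omega_f|\mz-\my_m|)^2,
\]
which is a sum of nonnegative real numbers. Summing over $f$ therefore keeps the expression real and nonnegative, so the outer absolute value in (\ref{MultiSM}) may be dropped and
\[
\W_{\mathrm{MF}}(\mz;F)\approx\sum_{f=1}^{F}\sum_{m=1}^{M}J_0(\omega_f|\mz-\my_m|)^2.
\]
Next I would interpret the inner sum over $f$ as a Riemann sum for a uniform partition of $[\omega_1,\omega_F]$ of mesh $(\omega_F-\omega_1)/F$, giving the approximation
\[
\sum_{f=1}^{F}J_0(\omega_f|\mz-\my_m|)^2 \approx \frac{F}{\omega_F-\omega_1}\int_{\omega_1}^{\omega_F}J_0(\omega|\mz-\my_m|)^2\,d\omega .
\]

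The main analytic step is then to reduce $\int J_0(\omega r)^2\,d\omega$ to a boundary term plus an integral of $J_1^2$. I would use the classical recurrences $J_0'(x)=-J_1(x)$ and $J_1'(x)=J_0(x)-J_1(x)/x$ to compute
\[
\frac{d}{d\omega}\Bigl[\omega\bigl(J_0(\omega r)^2+J_1(\omega r)^2\bigr)\Bigr]=J_0(\omega r)^2-J_1(\omega r)^2,
\]
after the cross terms $2\omega r J_0 J_1$ cancel. Integrating from $\omega_1$ to $\omega_F$ and rearranging gives the identity
\[
\int_{\omega_1}^{\omega_F}J_0(\omega r)^2\,d\omega=\omega_F\bigl(J_0(\omega_F r)^2+J_1(\omega_F r)^2\bigr)-\omega_1\bigl(J_0(\omega_1 r)^2+J_1(\omega_1 r)^2\bigr)+\int_{\omega_1}^{\omega_F}J_1(\omega r)^2\,d\omega,
\]
with $r=|\mz-\my_m|$. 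Substituting back into the Riemann-sum approximation and pulling the factor $F/(\omega_F-\omega_1)$ inside the three terms yields exactly the claimed expression (\ref{StructureMultiSM}).

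The conceptually delicate part is justifying the Riemann-sum approximation rigorously: $J_0(\omega r)^2$ is smooth in $\omega$, so the error is $O(1/F)$ uniformly on compacts away from the $\my_m$, but near the support $\sigma$ the factor $r=|\mz-\my_m|$ can be small and one needs to track that the Lipschitz constant of $\omega\mapsto J_0(\omega r)^2$ scales like $r$, which keeps the replacement error controlled. The Bessel computation itself is routine once the correct combination $J_0^2+J_1^2$ is guessed; identifying this combination as the natural antiderivative candidate (so that the $2\omega r J_0 J_1$ cross terms cancel via $J_0'=-J_1$) is the one non-mechanical insight.
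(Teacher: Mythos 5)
Your proposal is correct and follows essentially the same route the paper itself relies on: the lemma is only quoted from \cite{JKHP1} without proof here, but the paper's own proof of Theorem \ref{StructureWMSM} begins from exactly your Riemann-integral representation $\frac{F}{\omega_F-\omega_1}\sum_{m}\int_{\omega_1}^{\omega_F}J_0(\omega|\mz-\my_m|)^2\,d\omega$ (attributed to \cite{JKHP1}) and then invokes the antiderivative identity (\ref{BesselFormula}), which is precisely the identity you rederive via $\frac{d}{d\omega}\bigl[\omega\bigl(J_0(\omega r)^2+J_1(\omega r)^2\bigr)\bigr]=J_0(\omega r)^2-J_1(\omega r)^2$. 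One small caveat: carrying the prefactor through all three terms, as your computation does, yields $\frac{F}{\omega_F-\omega_1}\int_{\omega_1}^{\omega_F}J_1(\omega|\mz-\my_m|)^2\,d\omega$ for the last term, whereas the lemma as printed attaches only $F$ to that integral; since Theorem \ref{StructureWMSM} applies the factor uniformly to all terms, your version is the internally consistent one and the printed statement appears to contain a normalization slip rather than your proof containing an error.
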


This shows that (\ref{MultiSM}) yields better images owing to less oscillation than (\ref{SingleSM}) does so that unexpected artifacts in
the plot of $\W_{\mathrm{MF}}(\mz;F)$ are mitigated when $F$ is sufficiently large. This result indicates why a multi-frequency subspace
migration offers images with good resolution. On the other hand, it is possible to examine the same conclusion throughout the basis of
Statistical Hypothesis Testing in Statistical theory, refer to \cite{AGKPS,HHSZ}.

In order to improve multi-frequency subspace migration (\ref{MultiSM}), one must eliminate or control the last term of
(\ref{StructureMultiSM}). For this purpose, a weighted multi-frequency subspace migration has been introduced
\begin{equation}\label{WeightedMultiSM}
  \W_{\mathrm{WMF}}(\mz;F,n):=\left|\sum_{f=1}^{F}\sum_{m=1}^{M_f}(\omega_f)^n\langle\mW(\mz;\omega_f),\mU_m(\omega_f)\rangle\langle\mW(\mz;\omega_f),\overline{\mV}_m(\omega_f)\rangle\right|,
\end{equation}
and following result is obtained.

\begin{lem}[See \cite{P1}]
  If the total number of incident and observation directions $N$ is sufficiently large and satisfies $N>3M_f$ for $f=1,2,\cdots,F$. Then,
  multi-frequency subspace migration (\ref{WeightedMultiSM}) can be represented as follows:
  \begin{multline}\label{StructureWeightedMultiSM}
    \W_{\mathrm{WMF}}(\mz;F,n)\approx\frac{F}{n+1}\sum_{m=1}^{M}\left\{\frac{(\omega_F)^{n+1}}{\omega_F-\omega_1}\bigg(J_0(\omega_F|\mz-\my_m|)^2+J_1(\omega_F|\mz-\my_m|)^2\bigg)\right.\\
    \left.-\frac{(\omega_1)^{n+1}}{\omega_F-\omega_1}\bigg(J_0(\omega_1|\mz-\my_m|)^2+J_1(\omega_1|\mz-\my_m|)^2\bigg)+\mathbb{D}(|\mz-\my_m|,\omega_1,\omega_F;n)\right\}.
  \end{multline}
\end{lem}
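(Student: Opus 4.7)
The plan is to reduce the weighted formula (\ref{StructureWeightedMultiSM}) to the single-frequency structure lemma and then handle the $f$-sum analytically. First, I would apply Lemma~2.1 at each frequency $\omega_f$ to replace the inner subspace expression by its Bessel-function approximation,
\[
\sum_{m=1}^{M_f}\langle\mW(\mz;\omega_f),\mU_m(\omega_f)\rangle\langle\mW(\mz;\omega_f),\overline{\mV}_m(\omega_f)\rangle
\;\approx\;\sum_{m=1}^{M}J_0(\omega_f|\mz-\my_m|)^2,
\]
which is valid under the hypothesis $N>3M_f$. Since the right-hand side is nonnegative, the outer absolute value in (\ref{WeightedMultiSM}) can be dropped, and after interchanging the order of summation we obtain
\[
\W_{\mathrm{WMF}}(\mz;F,n)\;\approx\;\sum_{m=1}^{M}\sum_{f=1}^{F}(\omega_f)^n J_0(\omega_f|\mz-\my_m|)^2.
\]

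Next, I would regard $\{\omega_f\}_{f=1}^{F}$ as an equispaced discretization of $[\omega_1,\omega_F]$ and approximate the $f$-sum by the Riemann integral
\[
\sum_{f=1}^{F}(\omega_f)^n J_0(\omega_f|\mz-\my_m|)^2\;\approx\;\frac{F}{\omega_F-\omega_1}\int_{\omega_1}^{\omega_F}\omega^n J_0(\omega|\mz-\my_m|)^2\,d\omega,
\]
exactly as was done in the unweighted case. This reduces the problem to evaluating, or at least rewriting in a convenient form, the integral $\int_{\omega_1}^{\omega_F}\omega^n J_0(\omega r)^2\,d\omega$ with $r=|\mz-\my_m|$.

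The core analytic step, which I expect to be the main obstacle, is identifying the correct antiderivative. The idea is to exploit the Bessel identities $J_0'(x)=-J_1(x)$ and $J_1'(x)=J_0(x)-J_1(x)/x$ to compute
\[
\frac{d}{d\omega}\Bigl[\omega^{n+1}\bigl(J_0(\omega r)^2+J_1(\omega r)^2\bigr)\Bigr]
\;=\;(n+1)\omega^n J_0(\omega r)^2+(n-1)\omega^n J_1(\omega r)^2,
\]
after the cross terms $-2rJ_0J_1$ cancel against the $+2rJ_0J_1$ produced by the recurrence for $J_1'$. Solving for $\omega^n J_0(\omega r)^2$ and integrating from $\omega_1$ to $\omega_F$ yields
\[
\int_{\omega_1}^{\omega_F}\omega^n J_0(\omega r)^2\,d\omega
=\frac{1}{n+1}\Bigl[\omega^{n+1}\bigl(J_0(\omega r)^2+J_1(\omega r)^2\bigr)\Bigr]_{\omega_1}^{\omega_F}
-\frac{n-1}{n+1}\int_{\omega_1}^{\omega_F}\omega^n J_1(\omega r)^2\,d\omega.
\]

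Finally, substituting this identity into the Riemann-sum approximation and pulling out the common factor $\frac{F}{(n+1)(\omega_F-\omega_1)}$ produces the three terms displayed in (\ref{StructureWeightedMultiSM}), with the residual integral
\[
\mathbb{D}(|\mz-\my_m|,\omega_1,\omega_F;n)=-\frac{n-1}{\omega_F-\omega_1}\int_{\omega_1}^{\omega_F}\omega^n J_1(\omega|\mz-\my_m|)^2\,d\omega.
\]
As a sanity check, setting $n=0$ recovers the factor $\frac{1}{\omega_F-\omega_1}\int J_1(\omega r)^2 d\omega$ that appears in (\ref{StructureMultiSM}), confirming consistency with the unweighted lemma. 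The technical obstacle is entirely concentrated in spotting the antiderivative $\omega^{n+1}(J_0^2+J_1^2)$; once that identity is in hand, the rest of the proof is bookkeeping.
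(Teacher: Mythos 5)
Your argument is correct and follows the same route the paper uses for this family of results (the lemma itself is quoted from \cite{P1} without proof here, but your Riemann-sum reduction via the single-frequency lemma followed by the antiderivative $\omega^{n+1}\bigl(J_0(\omega r)^2+J_1(\omega r)^2\bigr)$ is exactly the strategy of the paper's proof of Theorem \ref{StructureWMSM}, and your residual term vanishes precisely at $n=1$, matching the paper's remark). Your $n=0$ sanity check is in fact the dimensionally consistent form; the missing factor $1/(\omega_F-\omega_1)$ on the last integral of (\ref{StructureMultiSM}) as printed appears to be a typo in the paper, not an error in your derivation.
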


Based on recent work \cite{P1}, the term $\mathbb{D}(|\mz-\my_m|,\omega_1,\omega_F;n)$, which is disturbing the imaging performance, is
eliminated when $n=1$ and remained when $n=0,2,3,\cdots$, respectively. Hence, it can be said that $\W_{\mathrm{WMF}}(\mz;F,1)$ is an
improved version of $\W_{\mathrm{MF}}(\mz;F)$.

\subsection{Structure of multi-frequency subspace migration weighted by natural logarithmic function}
Based on the improved procedure presented above, it is natural to consider the following multi-frequency subspace migration weighted by some
function $\xi(\omega)$:
\begin{equation}\label{WMSM}
  \W(\mz;F):=\left|\sum_{f=1}^{F}\sum_{m=1}^{M_f}\xi(\omega_f)\langle\mW(\mz;\omega_f),\mU_m(\omega_f)\rangle\langle\mW(\mz;\omega_f),\overline{\mV}_m(\omega_f)\rangle\right|.
\end{equation}
Then, based on the results in Section \ref{Sec3-1}, $\W(\mz;F)$ should be represented as the following form:
\begin{multline*}
  \W(\mz;F)=C\sum_{m=1}^{M}\left\{\frac{\phi(\omega_F)}{\omega_F-\omega_1}\bigg(J_0(\omega_F|\mz-\my_m|)^2+J_1(\omega_F|\mz-\my_m|)^2\bigg)\right.\\
  \left.-\frac{\phi(\omega_1)}{\omega_F-\omega_1}\bigg(J_0(\omega_1|\mz-\my_m|)^2+J_1(\omega_1|\mz-\my_m|)^2\bigg)+\mathbb{E}(|\mz-\my_m|,\omega_1,\omega_F;n)\right\},
\end{multline*}
where $\phi$ is a positive definite function. Based on several recent works \cite{JKHP1,P1}, the term
$\mathbb{E}(|\mz-\my_m|,\omega_1,\omega_F;n)$ is disturbing the imaging performance since it generates some unnecessary artifacts. However,
if we can find a suitable function $\xi(\omega_f)$, which makes $\mathbb{E}(|\mz-\my_m|,\omega_1,\omega_F;n)$ as a negative valued function
in the neighborhood of $\Gamma$, and small (or negative) valued one at the outside of neighborhood of $\Gamma$, $\W(\mz;F)$ will be an
improved subspace migration. Recently, it has been confirmed that $\W(\mz;F)$ is an improved version of $\W_{\mathrm{WMF}}(\mz;F,1)$ when
$\xi(\omega_f)=\ln(\omega_f)$. However, this fact has been examined via some results of numerical simulations. Thus, identification of
mathematical structure of (\ref{WMSM}) is still remaining. In order to identify this, we derive the following indefinite integration. This plays
an important role in exploring the structure of (\ref{WMSM}). Note that the derivation of following integration is easy but we have been
unable to find such a derivation.

\begin{thm}\label{TheoremLogIntegral}
  For every positive real number $x$, following identity holds
  \begin{equation}\label{IntegralFormulaLog}
    \int \ln(x)J_0(x)^2dx=(x\ln(x)-x)\left(J_0(x)^2+J_1(x)^2\right)+\int(\ln(x)-2)J_1(x)^2dx.
  \end{equation}
\end{thm}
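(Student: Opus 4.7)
The plan is to reduce the logarithmically weighted integral to ordinary (unweighted) integrals of $J_0^2$ and $J_1^2$ by a single integration by parts, using a classical antiderivative identity for Bessel functions as the key ingredient. Specifically, I would first establish the auxiliary identity
\begin{equation*}
\frac{d}{dx}\Bigl[x\bigl(J_0(x)^2+J_1(x)^2\bigr)\Bigr]=J_0(x)^2-J_1(x)^2,
\end{equation*}
which follows immediately by differentiating the product and invoking the standard recurrences $J_0'(x)=-J_1(x)$ and $J_1'(x)=J_0(x)-J_1(x)/x$; the cross terms $\pm 2xJ_0(x)J_1(x)$ cancel and leave exactly $J_0^2-J_1^2$. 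Equivalently, this gives the antiderivative formula $\int\bigl(J_0(x)^2-J_1(x)^2\bigr)\,dx = x(J_0(x)^2+J_1(x)^2)$.

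Next, I would perform integration by parts on $\int\ln(x)\bigl(J_0(x)^2-J_1(x)^2\bigr)\,dx$ with $u=\ln(x)$ and $dv=\bigl(J_0(x)^2-J_1(x)^2\bigr)dx$, so that by the auxiliary identity $v=x(J_0(x)^2+J_1(x)^2)$ and $du=dx/x$. The $1/x$ in $du$ cancels cleanly against the factor $x$ in $v$, yielding
\begin{equation*}
\int\ln(x)\bigl(J_0(x)^2-J_1(x)^2\bigr)dx
= x\ln(x)\bigl(J_0(x)^2+J_1(x)^2\bigr)-\int\bigl(J_0(x)^2+J_1(x)^2\bigr)dx.
\end{equation*}

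Finally, I would rearrange to isolate $\int\ln(x)J_0(x)^2\,dx$ on the left and collect $\int\ln(x)J_1(x)^2\,dx$ on the right, then use the auxiliary identity once more to rewrite $\int J_0(x)^2\,dx$ as $x(J_0(x)^2+J_1(x)^2)+\int J_1(x)^2\,dx$. The two contributions of $\int J_1(x)^2\,dx$ combine into $-2\int J_1(x)^2\,dx$, which together with $\int\ln(x)J_1(x)^2\,dx$ collapses to $\int(\ln(x)-2)J_1(x)^2\,dx$, and the algebraic coefficient of $J_0(x)^2+J_1(x)^2$ becomes $x\ln(x)-x$, producing exactly the claimed formula \eqref{IntegralFormulaLog}.

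The only real obstacle is spotting that the $\ln(x)$ weight should be paired with the particular combination $J_0^2-J_1^2$ (rather than $J_0^2$ alone), because this is the combination whose antiderivative $x(J_0^2+J_1^2)$ has a factor of $x$ that will annihilate the $1/x$ arising from differentiating $\ln(x)$; once this pairing is identified, the rest of the derivation is routine algebraic rearrangement using the antiderivative identity twice.
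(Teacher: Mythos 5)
Your proof is correct, and it takes a genuinely different route from the paper's. The paper integrates by parts in the opposite direction: it treats $J_0(x)^2$ as the factor to differentiate and $\ln(x)$ as the factor to integrate, producing the intermediate term $2\int x(\ln(x)-1)J_0(x)J_1(x)\,dx$, which it then handles with a second integration by parts built on the tabulated identity $\int x^2J_0(x)J_1(x)\,dx=\tfrac{x^2}{2}J_1(x)^2$. You instead differentiate the logarithm, pairing it with the combination $J_0(x)^2-J_1(x)^2$, whose antiderivative $x\bigl(J_0(x)^2+J_1(x)^2\bigr)$ you derive directly from the recurrences $J_0'(x)=-J_1(x)$ and $J_1'(x)=J_0(x)-J_1(x)/x$; the factor of $x$ then absorbs the $1/x$ coming from $d(\ln x)$, and a single rearrangement finishes the job. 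Your key lemma is exactly the differentiated form of the identity (\ref{BesselFormula}), $\int J_0(x)^2dx=x\bigl(J_0(x)^2+J_1(x)^2\bigr)+\int J_1(x)^2dx$, which the paper itself only invokes later, in the proof of Theorem \ref{StructureWMSM}; so your argument is somewhat more economical and self-contained, using one integration by parts instead of two and no appeal to the table entry for $\int x^2J_0J_1\,dx$. Both routes land on the same formula; the paper's has the minor advantage of needing only $J_0'=-J_1$ plus a cited table, whereas yours also uses the recurrence for $J_1'$ but proves its auxiliary identity from scratch.
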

\begin{proof}
Since
\[\frac{d}{dx}J_0(x)=-J_1(x),\]
an integration by parts yields
\begin{equation}\label{term1}
  \int\ln(x)J_0(x)^2dx=(x \ln(x)-x)J_0(x)^2 +2\int x(\ln(x)-1)J_0(x)J_1(x)dx.
\end{equation}
Applying following identity (see \cite[page 17]{R})
\[\int x^2 J_0(x)J_1(x)dx=\frac{x^2}{2}J_1(x)^2,\]
leads us
\begin{align}
\begin{aligned}\label{term2}
  \int x(\ln(x)-1)J_0(x)J_1(x)dx&=\int\left(\frac{\ln(x)-1}{x}\right)x^2 J_0(x)J_1(x)dx\\
  &=\left(\frac{\ln(x)-1}{x}\right)\frac{x^2}{2}J_1(x)^2-\int\left(\frac{2-\ln(x)}{x^2}\right)\frac{x^2}{2}J_1(x)^2dx\\
  &=\frac12\left((x\ln(x)-x)J_1(x)^2-\int(2-\ln(x))J_1(x)^2 dx\right).\\
\end{aligned}
\end{align}
Combining (\ref{term1}) and (\ref{term2}), we can obtain the desired result (\ref{IntegralFormulaLog}). This completes the proof.
\end{proof}

Applying Theorem \ref{TheoremLogIntegral}, we can explore the structure of $\W(\mz;F)$.
\begin{thm}\label{StructureWMSM}
  Assume that total number of incident and observation directions $N$ is sufficiently large and satisfies $N>3M_f$ for $f=1,2,\cdots,F$.
  Then, multi-frequency subspace migration (\ref{WMSM}) weighted by $\xi(\omega_f)=\ln(\omega_f)$ can be represented as follows:
  \begin{align*}\label{StructureWeightedMultiSM}
    \W(\mz;F)\approx
    \frac{F}{\omega_F-\omega_1}\sum_{m=1}^{M_f}&\left\{\omega_F\ln(\omega_F)\bigg(J_0(\omega_F|\mz-\my_m|)^2+J_1(\omega_F|\mz-\my_m|)^2\bigg)\right.\\
    &-\omega_1\ln(\omega_1)\bigg(J_0(\omega_1|\mz-\my_m|)^2+J_1(\omega_1|\mz-\my_m|)^2\bigg)\\
    &\left.-\int_{\omega_1}^{\omega_F}\bigg(J_0(\omega|\mz-\my_m|)^2-(\ln(\omega)-1)J_1(\omega|\mz-\my_m|)^2\bigg)d\omega\right\}.
  \end{align*}
\end{thm}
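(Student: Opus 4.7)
The plan is to combine the single-frequency structure result from \cite{JKHP1} with a Riemann-sum approximation and then apply Theorem \ref{TheoremLogIntegral}. First, invoking the single-frequency lemma at each $\omega_f$ replaces the inner product pairs $\langle\mW(\mz;\omega_f),\mU_m(\omega_f)\rangle\langle\mW(\mz;\omega_f),\overline{\mV}_m(\omega_f)\rangle$ by $J_0(\omega_f|\mz-\my_m|)^2$. Swapping summation order and using the assumption of equally spaced frequencies lets me pass to the Riemann-sum approximation
\[
\sum_{f=1}^F \ln(\omega_f) J_0(\omega_f r_m)^2 \;\approx\; \frac{F}{\omega_F-\omega_1}\int_{\omega_1}^{\omega_F}\ln(\omega) J_0(\omega r_m)^2\,d\omega,
\]
where I write $r_m=|\mz-\my_m|$. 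This is precisely the same Riemann-sum step that produced the $F/(\omega_F-\omega_1)$ prefactor in the lemmas quoted from \cite{JKHP1,P1}.

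Second, I would apply the substitution $x = \omega r_m$, so that $\ln(\omega) = \ln(x) - \ln(r_m)$, and split the integral into two pieces: $\frac{1}{r_m}\int \ln(x) J_0(x)^2\,dx$ minus $\frac{\ln(r_m)}{r_m}\int J_0(x)^2\,dx$, both over the rescaled limits $[\omega_1 r_m,\omega_F r_m]$. Theorem \ref{TheoremLogIntegral} evaluates the first piece, and for the second I would use the standard identity $\int J_0(x)^2\,dx = x(J_0(x)^2+J_1(x)^2) + \int J_1(x)^2\,dx$, which follows from $J_0'=-J_1$ together with $\int x^2 J_0(x)J_1(x)\,dx = x^2 J_1(x)^2/2$ in exactly the same manner as Theorem \ref{TheoremLogIntegral} is proved.

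Third, I would collect the boundary contributions and verify that the $\ln(r_m)$ factors cancel between the two pieces: the boundary of the first piece is $\omega\ln(\omega r_m)(J_0^2+J_1^2)-\omega$ after un-substituting, while the second piece contributes $-\omega\ln(r_m)(J_0^2+J_1^2)$, so the $\ln(r_m)$ terms cancel and the boundary reduces to $\omega\ln(\omega)(J_0(\omega r_m)^2+J_1(\omega r_m)^2)\big|_{\omega_1}^{\omega_F}$ as asserted. The remaining integrand $(\ln(x)-2)J_1(x)^2 - \ln(r_m)J_1(x)^2$ resubstitutes to $(\ln(\omega)-2)J_1(\omega r_m)^2$; one more application of the $\int J_0^2$ identity (to absorb a stray $-\omega(J_0^2+J_1^2)$ boundary term) then rewrites the residual as $-\int_{\omega_1}^{\omega_F}\bigl(J_0(\omega r_m)^2-(\ln(\omega)-1)J_1(\omega r_m)^2\bigr)d\omega$, matching the form in the statement.

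The principal obstacle is the bookkeeping across these two reparameterizations: one must track that the $\ln(r_m)$ artifacts from $x=\omega r_m$ cancel exactly against the boundary terms of Theorem \ref{TheoremLogIntegral}, and that the integrand $(\ln(\omega)-2)J_1(\omega r_m)^2$ emerging naturally from the indefinite integration is algebraically equivalent — modulo a boundary term that gets absorbed into $\omega\ln(\omega)(J_0^2+J_1^2)$ — to the form $-J_0^2+(\ln(\omega)-1)J_1^2$ displayed in the theorem. Apart from this reconciliation, every other step is a direct transcription of the arguments used to prove the analogous lemmas in \cite{JKHP1,P1}.
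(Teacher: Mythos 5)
Your proposal is correct and follows essentially the same route as the paper's own proof: the Riemann-sum reduction to $\frac{F}{\omega_F-\omega_1}\int_{\omega_1}^{\omega_F}\ln(\omega)J_0(\omega|\mz-\my_m|)^2\,d\omega$, the substitution $x=\omega|\mz-\my_m|$ with the split $\ln(x)-\ln|\mz-\my_m|$, the application of Theorem \ref{TheoremLogIntegral} together with the identity (\ref{BesselFormula}), and a second use of (\ref{BesselFormula}) to absorb the stray $-\omega\Lambda$ boundary term into the final integrand. The bookkeeping you flag (cancellation of the $\ln|\mz-\my_m|$ contributions between the two pieces) works out exactly as you describe, matching the paper's equations (\ref{term3}) and (\ref{term4}).
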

\begin{proof}
  For the sake of simplicity, we assume that $M_f=M$ for all $f=1,2,\cdots,F$, and denote
  \[\Lambda(x):=J_0(x)^2 +J_1(x)^2.\]
  Then by virtue of \cite{JKHP1}, (\ref{StructureWeightedMultiSM}) can be written as follows
  \[\W(\mz;F)\approx\left|\frac{F}{\omega_F-\omega_1}\sum_{m=1}^{M}\int_{\omega_1}^{\omega_F}\ln(\omega)J_0(\omega|\mz-\my_m|)^2\right|.\]
  Then, the change of variable $x=\omega|\mz-\my_m|$ yields
  \begin{multline*}
    \int_{\omega_1}^{\omega_F}\ln(\omega)J_0(\omega|\mz-\my_m|)^2=\frac{1}{|\mz-\my_m|}\int_{\omega_1|\mz-\my_m|}^{\omega_F|\mz-\my_m|}\ln\left(\frac{x}{|\mz-\my_m|}\right)J_0(x)^2dx\\
    =\frac{1}{|\mz-\my_m|}\int_{\omega_1|\mz-\my_m|}^{\omega_F|\mz-\my_m|}\ln(x)J_0(x)^2dx-\frac{\ln|\mz-\my_m|}{|\mz-\my_m|}\int_{\omega_1|\mz-\my_m|}^{\omega_F|\mz-\my_m|}J_0(x)^2dx.
  \end{multline*}

  Since
  \begin{multline*}
    \frac{1}{|\mz-\my_m|}\int_{\omega_1|\mz-\my_m|}^{\omega_F|\mz-\my_m|}\ln(x)J_0(x)^2dx=\\
    \frac{1}{|\mz-\my_m|}\left\{\bigg[\bigg(x\ln(x)-x\bigg)\Lambda(x)\bigg]_{\omega_1|\mz-\my_m|}^{\omega_F|\mz-\my_m|}
    -\int_{\omega_1|\mz-\my_m|}^{\omega_F|\mz-\my_m|}(2-\ln(x))J_1(x)^2dx\right\},
  \end{multline*}
  applying Theorem \ref{TheoremLogIntegral} yields
  \begin{multline}\label{term3}
    \frac{1}{|\mz-\my_m|}\int_{\omega_1|\mz-\my_m|}^{\omega_F|\mz-\my_m|}\ln(x)J_0(x)^2dx=\omega_F\bigg(\ln(\omega_F|\mz-\my_m|)-1\bigg)\Lambda(\omega_F|\mz-\my_m|)\\
    -\omega_1\bigg(\ln(\omega_1|\mz-\my_m|)-1\bigg)\Lambda(\omega_1|\mz-\my_m|)+\int_{\omega_1}^{\omega_F}(\ln(\omega|\mz-\my_m|)-2)J_1(\omega|\mz-\my_m|)^2d\omega.
  \end{multline}

  Finally, let us apply following indefinite integral formula of the Bessel function (see \cite[page 35]{R}):
  \begin{equation}\label{BesselFormula}
    \int J_0(x)^2dx=x\bigg(J_0(x)^2+J_1(x)^2\bigg)+\int J_1(x)^2dx.
  \end{equation}
  Then, an elementary calculus leads us to
  \begin{multline}\label{term4}
    \frac{\ln|\mz-\my_m|}{|\mz-\my_m|}\int_{\omega_1|\mz-\my_m|}^{\omega_F|\mz-\my_m|}J_0(x)^2dx=\omega_F\ln|\mz-\my_m|\Lambda(\omega_F|\mz-\my_m|)\\
    -\omega_1\ln|\mz-\my_m|\Lambda(\omega_1|\mz-\my_m|)+\ln|\mz-\my_m|\int_{\omega_1}^{\omega_F}J_1(\omega|\mz-\my_m|)^2d\omega.
  \end{multline}

  By combining (\ref{term3}) and (\ref{term4}), we can obtain
  \begin{multline*}
    \int_{\omega_1}^{\omega_F}\ln(\omega)J_0(\omega|\mz-\my_m|)^2=
    \frac{F}{\omega_F-\omega_1}\left\{(\omega_F\ln(\omega_F)-\omega_F)\bigg(J_0(\omega_F|\mz-\my_m|)^2+J_1(\omega_F|\mz-\my_m|)^2\bigg)\right.\\
    \left.-(\omega_1\ln(\omega_1)-\omega_1)\bigg(J_0(\omega_1|\mz-\my_m|)^2+J_1(\omega_1|\mz-\my_m|)^2\bigg)+\int_{\omega_1}^{\omega_F}(\ln(\omega)-2)J_1(\omega|\mz-\my_m|)^2d\omega\right\}.
  \end{multline*}
  With this, we can obtain the desired result by applying (\ref{BesselFormula}) again. This completes the proof.
\end{proof}

The above result leads us to the following result of improvement.
\begin{thm}\label{ImprovementTheorem}
   Under the same assumption of Theorem \ref{StructureWMSM}, multi-frequency subspace migration (\ref{WMSM}) weighted by
   $\xi(\omega_f)=\ln(\omega_f)$ improves $\W_{\mathrm{WMF}}(\mz;F,1)$ of (\ref{WeightedMultiSM}).
\end{thm}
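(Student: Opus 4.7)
The plan is to compare directly the asymptotic formulas for the two functionals, using Theorem \ref{StructureWMSM} for $\W(\mz;F)$ with $\xi(\omega)=\ln\omega$ and the $n=1$ case of (\ref{StructureWeightedMultiSM}) for $\W_{\mathrm{WMF}}(\mz;F,1)$. The crucial input from \cite{P1} is that $\mathbb{D}(|\mz-\my_m|,\omega_1,\omega_F;1)\equiv 0$, so $\W_{\mathrm{WMF}}(\mz;F,1)$ is expressed purely through the two principal terms $J_0(\omega_F|\mz-\my_m|)^2+J_1(\omega_F|\mz-\my_m|)^2$ and $J_0(\omega_1|\mz-\my_m|)^2+J_1(\omega_1|\mz-\my_m|)^2$, weighted respectively by $\omega_F^2/(\omega_F-\omega_1)$ and $\omega_1^2/(\omega_F-\omega_1)$. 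In the log-weighted functional these coefficients are replaced by $\omega_F\ln\omega_F/(\omega_F-\omega_1)$ and $\omega_1\ln\omega_1/(\omega_F-\omega_1)$, and there appears an additional correction
\begin{equation*}
\mathbb{E}(|\mz-\my_m|,\omega_1,\omega_F) := -\int_{\omega_1}^{\omega_F}\Big(J_0(\omega|\mz-\my_m|)^2-(\ln\omega-1)J_1(\omega|\mz-\my_m|)^2\Big)d\omega.
\end{equation*}
By the criterion stated in the paragraph preceding Theorem \ref{TheoremLogIntegral}, the whole claim reduces to checking that $\mathbb{E}$ is negative in a neighborhood of $\Gamma$ and is small (or negative) outside that neighborhood.

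For the first property, near $\mz=\my_m$ the argument $\omega|\mz-\my_m|$ stays small on the whole integration range, so $J_0^2\approx 1$ and $J_1^2\approx 0$, giving $\mathbb{E}\approx -(\omega_F-\omega_1)<0$ as required. For the second, far from $\Gamma$ I would invoke the large argument asymptotics $J_n(x)^2\sim \frac{1}{\pi x}\big(1+\cos(2x-(2n+1)\pi/2)\big)$ after the change of variable $x=\omega|\mz-\my_m|$ already used in the proof of Theorem \ref{StructureWMSM}; both pieces of the integrand then contribute terms decaying like $O(|\mz-\my_m|^{-1})$ up to logarithmic factors, so that $\mathbb{E}$ is negligible compared with the peak terms in the background regime.

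The hard part will be to make the word \emph{improves} quantitative. Because $\W_{\mathrm{WMF}}(\mz;F,1)$ already has $\mathbb{D}\equiv 0$ and a clean two-term structure, the advantage of the log-weighted functional comes from a more favorable peak-to-artifact profile rather than from the elimination of a disturbing term. This forces one to fix a concrete notion of contrast, such as the ratio between $|\W(\mz;F)|$ at $\mz=\my_m$ and the envelope of its oscillations for $\mz$ away from $\sigma$, and to show that the reweighting from $\omega^2$ to $\omega\ln\omega$, which preferentially damps the high-frequency modes responsible for the worst secondary oscillations, together with the controlled negative contribution of $\mathbb{E}$, genuinely raises that ratio. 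The intermediate regime where neither the small nor the large argument expansion of the Bessel functions is sharp will be the main technical obstacle, and I would expect to need a finer uniform bound on $\int_{\omega_1}^{\omega_F}(\ln\omega-1)J_1(\omega|\mz-\my_m|)^2\,d\omega$ to close the estimate.
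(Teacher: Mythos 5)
Your proposal follows essentially the same route as the paper's own proof: both isolate the correction term $-\int_{\omega_1}^{\omega_F}J_0(\omega|\mz-\my_m|)^2d\omega+\int_{\omega_1}^{\omega_F}(\ln\omega-1)J_1(\omega|\mz-\my_m|)^2d\omega$ (your $\mathbb{E}$), show it is negative near $\my_m$ using the small-argument Bessel asymptotics $J_0\approx1$, $J_1(x)\approx x/2$, and show it is negligible far from $\Gamma$ using the large-argument asymptotics. The quantitative issues you flag --- the intermediate regime and a precise meaning of ``improves'' --- are likewise left open by the paper, whose proof also treats only the two extreme regimes.
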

\begin{proof}
  Throughout the proof, we assume that all $\omega_f=2\pi/\lambda_f$ are sufficiently large enough and $\lambda_1-\lambda_F$ are small
  enough such that\footnote{In Section \ref{Sec4}, we set $\lambda_1=0.5$ and $\lambda_F=0.3$. Therefore, the value $C$ is smaller than
  $1$.}
  \[0<\ln\left(\frac{\lambda_1}{\lambda_F}\right)<C.\]

  In order to show the improvement of (\ref{WMSM}), we recall the structure (\ref{StructureWeightedMultiSM})
  \begin{multline*}
    \W(\mz;F)\approx\frac{F}{\omega_F-\omega_1}\sum_{m=1}^{M_f}\left\{\omega_F\ln(\omega_F)\bigg(J_0(\omega_F|\mz-\my_m|)^2+J_1(\omega_F|\mz-\my_m|)^2\bigg)\right.\\
    \left.-\omega_1\ln(\omega_1)\bigg(J_0(\omega_1|\mz-\my_m|)^2+J_1(\omega_1|\mz-\my_m|)^2\bigg)-\E_1+\E_2\right\}.
  \end{multline*}
  where
  \[\E_1:=\int_{\omega_1}^{\omega_F}J_0(\omega|\mz-\my_m|)^2d\omega\quad\mbox{and}\quad
  \E_2:=\int_{\omega_1}^{\omega_F}(\ln(\omega)-1)J_1(\omega|\mz-\my_m|)^2d\omega.\]
  Based on the structure, it is enough to show that $\E_1\geq\E_2$, i.e, the term $-\E_1+\E_2$ is negative.

  \begin{enumerate}
  \item Suppose that $\mz$ is far away from $\my_m$, then since $\omega$ is sufficiently large, the following asymptotic form holds for any
      integer $n$,
  \begin{equation}\label{AsymptoticBessel}
    J_n(\omega|\mz-\my_m|)\approx\sqrt{\frac{2}{\omega|\mz-\my_m|}}\cos\left(\omega|\mz-\my_m|-\frac{n\pi}{2}-\frac{\pi}{4}\right)\longrightarrow0.
  \end{equation}
  Hence, $\E_1\longrightarrow0$. Moreover, applying L'H{\^o}pital's rule, it is easy to observe that
  \[(\ln(\omega)-1)J_1(\omega|\mz-\my_m|)^2\approx\frac{2(\ln(\omega)-1)}{\omega|\mz-\my_m|}\cos^2\left(\omega|\mz-\my_m|-\frac{3\pi}{4}\right)\longrightarrow0.\]
  Hence, we can conclude that the values $\E_1$ and $\E_2$ are negligible, i.e., $-\E_1+\E_2\approx0$.
  \item Assume that $\mz$ is close enough to $\my_m$ such that
  \[0<\omega_f|\mz-\my_m|\ll\sqrt{2}\]
  for all $f=1,2,\cdots,F$. Let $\alpha(x)$ denotes the Gamma function. Then, since the following asymptotic property of Bessel function
  holds for any integer $n$,
  \[J_n(x)\approx\frac{1}{\alpha(n+1)}\left(\frac{x}{2}\right)^n,\]
  we can observe that
  \[\E_1\approx\int_{\omega_1}^{\omega_F}d\omega=\omega_F-\omega_1=\mathcal{O}(\omega_F),\]
  and
  \begin{align*}
    \E_2&\approx\int_{\omega_1}^{\omega_F}(\ln(\omega)-1)\frac{\omega^2|\mz-\my_m|^2}{4}d\omega\ll\frac12\bigg[\omega\ln(\omega)-2\omega\bigg]_{\omega_1}^{\omega_F}\\
    &=\frac{1}{2}\bigg(\omega_F\ln(\omega_F)-2\omega_F-\omega_1\ln(\omega_1)+2\omega_1\bigg)\\
    &=\frac{1}{2}\left\{\omega_F\ln\left(\frac{\omega_F}{\omega_1}\right)+(\omega_F-\omega_1)\ln(\omega_1)-2(\omega_F-\omega_1)\right\}\\
    &<\frac{1}{2}\bigg(C\omega_F+\omega_F\ln(\omega_1)+2\omega_1\bigg)=\mathcal{O}(\omega_F).
  \end{align*}
  Therefore, the term $\E_2$ is dominated by $\E_1$, i.e., $-\E_1+\E_2$ is of negative value.
  \end{enumerate}
  Based on above observations, it is true that the term $-\E_1+\E_2$ is negative in the neighborhood of $\Gamma$, and close to zero at the
  outside of the neighborhood of $\Gamma$ (see Figure \ref{NegativeRegion}). This means that the results via (\ref{WMSM}) with
  $\xi(\omega)=\ln(\omega)$ will be better owing to less oscillation than (\ref{WeightedMultiSM}) with $n=1$ does. This completes the
  proof.
\end{proof}

\begin{figure}[!ht]
\begin{center}
\includegraphics[width=0.5\textwidth]{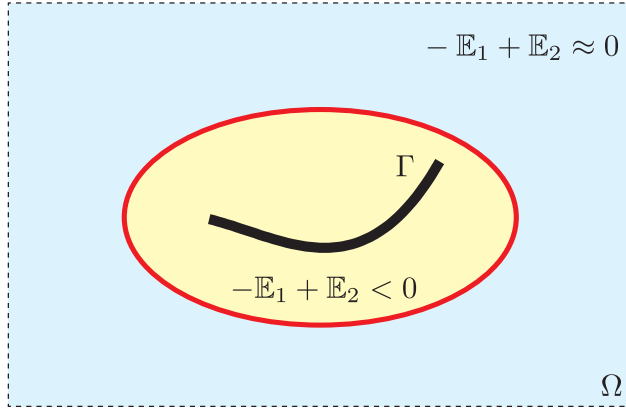}
\caption{\label{NegativeRegion}Description of the result in Theorem \ref{ImprovementTheorem}.}
\end{center}
\end{figure}

Note that $\W(\mz;F)$ has its maximum value at $\mz=\my_m\in\Gamma$. Hence, we can immediately obtain following result of unique
determination.
\begin{cor}
   Let the applied frequency $\omega$ be sufficiently high. If the total number $N$ of incident and observation directions and total number
   $F$ of applied frequencies are sufficiently large, then the shape of supporting curve $\sigma$ of thin inclusion $\Gamma$ can be obtained
   uniquely via the map of $\W(\mz;F)$.
\end{cor}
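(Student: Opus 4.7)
The plan is to exploit the asymptotic representation of $\W(\mz;F)$ derived in Theorem~\ref{StructureWMSM}, together with the sign control coming from Theorem~\ref{ImprovementTheorem}, to show that the map $\mz \mapsto \W(\mz;F)$ attains its global maxima exactly on the discretization $\{\my_m : m = 1,\ldots,M\}$ of $\sigma$; the curve $\sigma$ is then recovered by a peak-picking / level-set argument once $N$ and $F$ are large enough that $\{\my_m\}$ fills $\sigma$ at sub-wavelength resolution.

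First I would evaluate the representation of $\W(\mz;F)$ in Theorem~\ref{StructureWMSM} at $\mz = \my_m$. Using $J_0(0) = 1$ and $J_1(0) = 0$, the two boundary terms at $\omega_F$ and $\omega_1$ collapse to
\[
\frac{F}{\omega_F - \omega_1}\bigl(\omega_F \ln\omega_F - \omega_1 \ln\omega_1\bigr),
\]
which, once $\omega$ is sufficiently high, is strictly positive and of order $F \ln\omega_F$. Meanwhile, by the estimates produced inside the proof of Theorem~\ref{ImprovementTheorem}, the integral contribution $-\E_1 + \E_2$ at this $\mz$ is of lower order, so each diagonal summand contributes a dominant positive peak value.

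Next, for $\mz$ far from every $\my_m$, I would reuse the asymptotic decay (\ref{AsymptoticBessel}) of $J_n(\omega|\mz - \my_m|)$ to bound each $\Lambda$ factor by $O\bigl((\omega|\mz - \my_m|)^{-1}\bigr)$, and again invoke Theorem~\ref{ImprovementTheorem} to ensure that the residual term $-\E_1 + \E_2$ is close to zero. These two facts force $\W(\mz;F)$ to be uniformly smaller than its peak value on a fixed annulus around each $\my_m$. The required separation between distinct sources is already guaranteed by the resolution heuristic following (\ref{ElementofMSR}), since the segments are spaced at order $\lambda_F/2$, so cross-terms between different $\my_m$ and $\my_{m'}$ do not produce spurious coincidences of maxima.

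Combining these two regimes, the map of $\W(\mz;F)$ carries sharp peaks of magnitude dominant in $\omega_F \ln\omega_F$ exactly at $\mz = \my_m \in \sigma$, and is strictly smaller elsewhere; as $N$ and $F$ grow the sample set $\{\my_m\}$ becomes dense in $\sigma$, so the supporting curve is recovered uniquely. The main obstacle I anticipate is making the separation argument truly quantitative: the Bessel factors $J_0^2$ and $J_1^2$ still oscillate between the sampled frequencies, and one must check that the Riemann-sum approximation used to pass from (\ref{WMSM}) to the integral representation of Theorem~\ref{StructureWMSM} is accurate enough to exclude secondary local maxima. Letting $F$ be sufficiently large so that $\omega_{f+1} - \omega_f$ is smaller than the oscillation length of $J_0(\omega|\mz - \my_m|)^2$ is the precise ingredient that prevents such artifacts and yields the stated unique determination.
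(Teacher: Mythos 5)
Your proposal takes essentially the same route as the paper, which proves the corollary in a single line by noting that $\W(\mz;F)$ attains its maximum value at $\mz=\my_m\in\Gamma$ (as a consequence of Theorems \ref{StructureWMSM} and \ref{ImprovementTheorem}) and declaring the unique determination immediate. Your evaluation at $\mz=\my_m$ via $J_0(0)=1$, $J_1(0)=0$, the decay estimate away from $\sigma$, and the density of $\{\my_m\}$ in $\sigma$ for large $N$ and $F$ are a faithful, more detailed expansion of exactly that argument.
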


\section{Results of Numerical simulation and discussion}\label{Sec4}
\subsection{General configuration of numerical simulation}
Some numerical simulation experiments are performed in order to support Theorems \ref{StructureWMSM} and \ref{ImprovementTheorem}.
Throughout this section, the search vector $\mz$ is included in the square $\Omega=[-1,1]\times[-1,1]$. In order to describe thin inclusions
$\Gamma_j$, two smooth curves are selected as follows:
\begin{align*}
  \sigma_1&=\set{[s-0.2,-0.5s^2+0.5]^T~:~-0.5\leq s\leq0.5}\\
  \sigma_2&=\set{[s+0.2,s^3+s^2-0.6]^T~:~-0.5\leq s\leq0.5}.
\end{align*}
The thickness $h$ of thin inclusions $\Gamma_j$ is equally set to $0.015$. We denote $\eps_j$ and $\mu_j$ be the permittivity and
permeability of $\Gamma_j$, respectively, and set parameters $\mu_j$, $\mu_0$, $\eps_j$ and $\eps_0$ are $5,1,5$ and $1$, respectively.
Since $\mu_0$ and $\eps_0$ are set to unity, the applied frequencies reads as $\omega_f=2\pi/\lambda_f$ at wavelength $\lambda_f$ for
$f=1,2,\cdots,F(=10)$, which will be varied in the numerical examples between $\lambda_1=0.5$ and $\lambda_{10}=0.3$.

For the incident directions $\vt_l$, they are selected as
\[\vt_l=-\left[\cos\frac{2\pi(l-1)}{N},\sin\frac{2\pi(l-1)}{N}\right]^T\quad\mbox{for}\quad l=1,2,\cdots,N,\]
and $N=48$ total number of directions has chosen. Since $\mu>\mu_0$, the dominant eigenvectors of $\mathbb{K}(\omega_f)$ are $\mn(\my_m)$,
$\mc=[1,\mn(\my_m)]^T$ will be the best choice. However, we have no \textit{a priori} information of $\mn(\my_m)$, $\mc=[1,0,1]^T$ has
selected for generating $\mW(\mz;\omega_f)$ of (\ref{VecW}). For a more detailed discussion, we recommend a recent work \cite[Section
4.3]{PL3}.

In order to show the robustness, a white Gaussian noise with $10$dB signal-to-noise ratio (SNR) added to the unperturbed far-field pattern
data $u_{\infty}(\vv_j,\vt_l;\omega)$ via a standard MATLAB command \texttt{awgn} included in the \texttt{Communications System Toolbox}
package. For discriminating non-zero singular values of MSR matrix $\mathbb{K}(\omega_f)$, a $0.01-$threshold scheme is applied for each $f$ (see \cite{PL1,PL3} for instance). Throughout this section, only both permittivity and permeability contrast case is considered.

\subsection{Imaging results and related discussions}
In Figure \ref{Figure1}, some imaging results via $\W_{\mathrm{MF}}(\mz;10)$, $\W_{\mathrm{WMF}}(\mz;10,1)$, and $\W(\mz;10)$ are exhibited
when the thin inclusion is $\Gamma_1$. By comparing these results, we can immediately observe that unexpected artifacts can be examined in
the results via $\W_{\mathrm{MF}}(\mz;10)$ and $\W_{\mathrm{WMF}}(\mz;10,1)$ but, as we expected, they dramatically disappeared in the
map of $\W(\mz;10)$.

\begin{figure}[!ht]
\begin{center}
\includegraphics[width=0.325\textwidth]{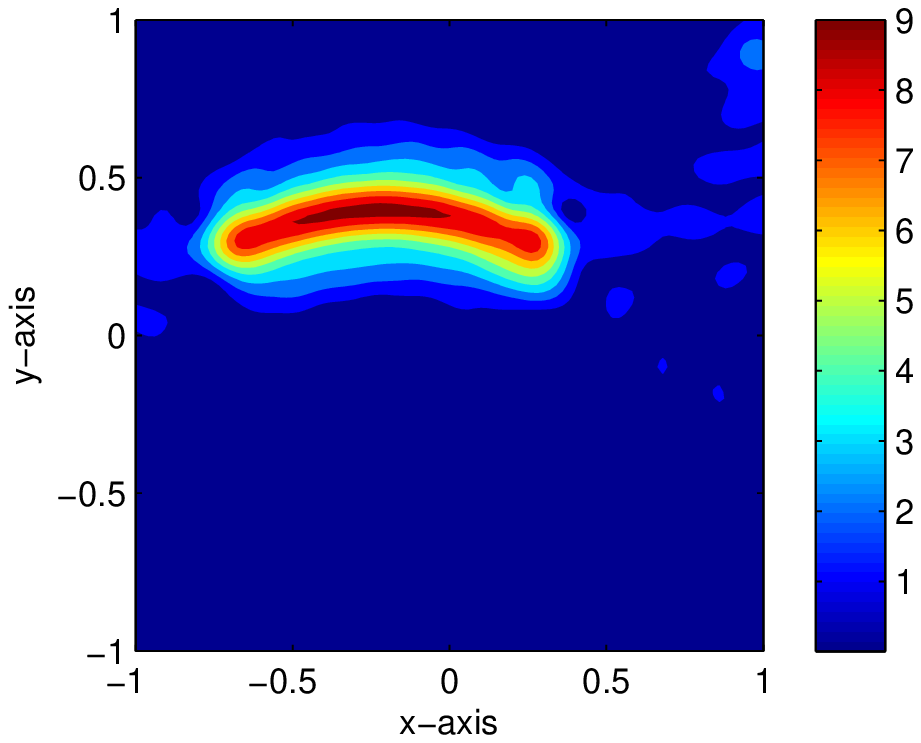}
\includegraphics[width=0.325\textwidth]{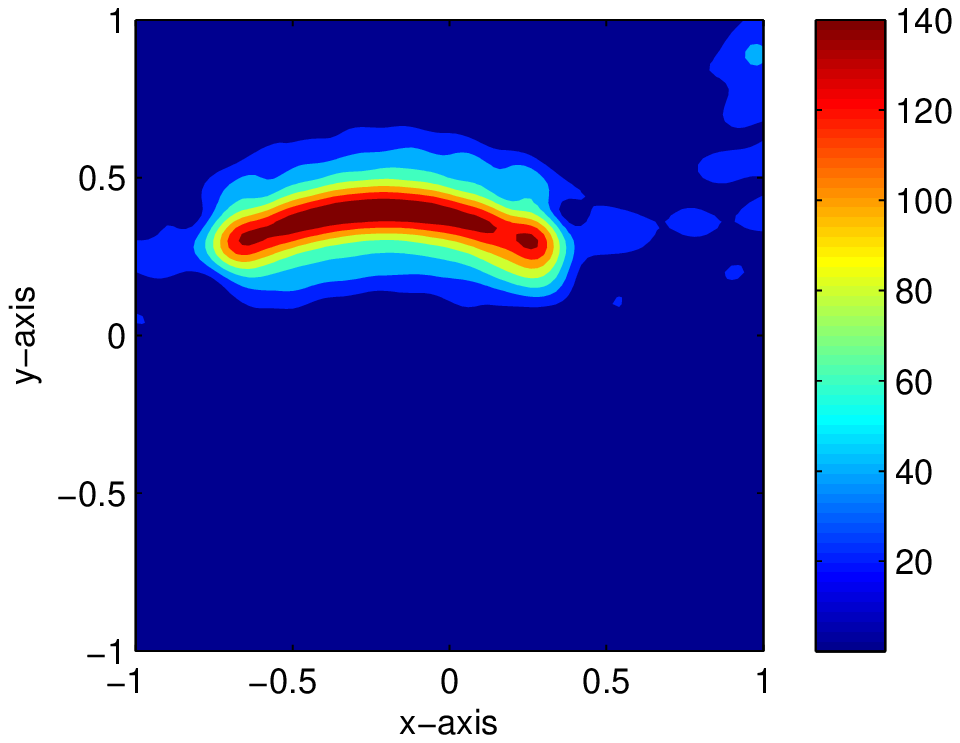}
\includegraphics[width=0.325\textwidth]{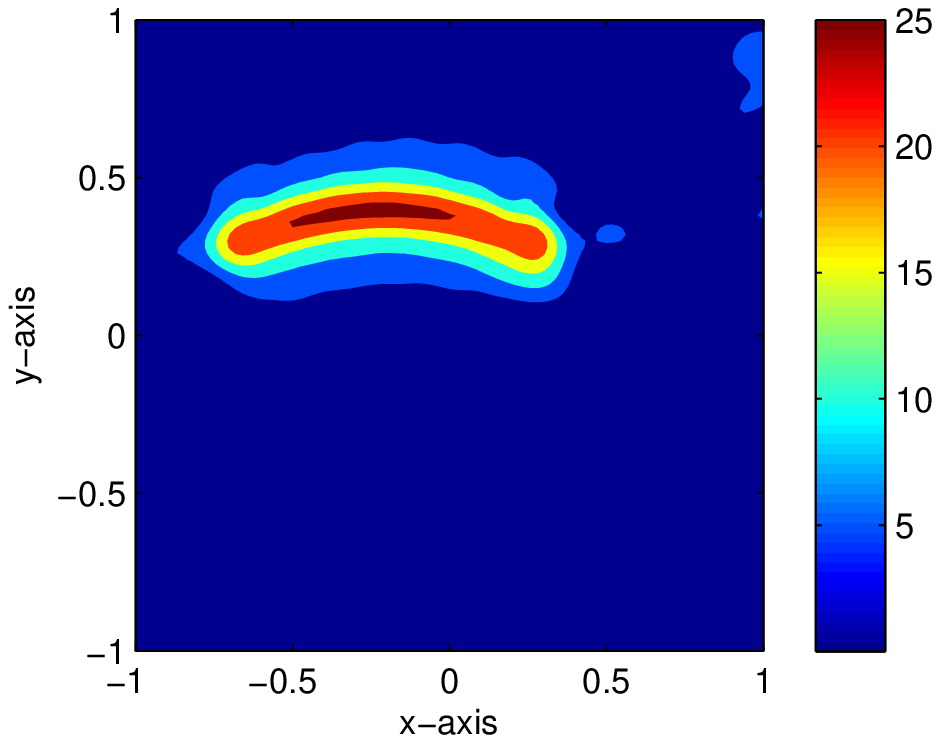}
\caption{\label{Figure1}Maps of $\W_{\mathrm{MF}}(\mz;10)$ (left), $\W_{\mathrm{WMF}}(\mz;10,1)$ (center), and $\W(\mz;10)$ (right) when the
thin inclusion is $\Gamma_1$.}
\end{center}
\end{figure}

Now, let us consider the imaging results of $\Gamma_2$ in Figure \ref{Figure2}. Based on this, we can observe the same phenomenon as in Figure \ref{Figure1}. Although proposed algorithm successfully eliminates artifacts, some part of $\Gamma_2$ can't be visible. This is due to the selection of $\mc=[1,0,1]^T$. Based on the shape of $\sigma_2$, the unit normal direction $\mn$ is similar to $[0,1]^T$ for $-0.5\leq s\leq0.3$. However, when $s\geq0.3$, $\mn$ is immensely different from $[0,1]^T$. Hence, finding an optimal $\mc$ is still remaining as an
interesting subject.

\begin{figure}[!ht]
\begin{center}
\includegraphics[width=0.325\textwidth]{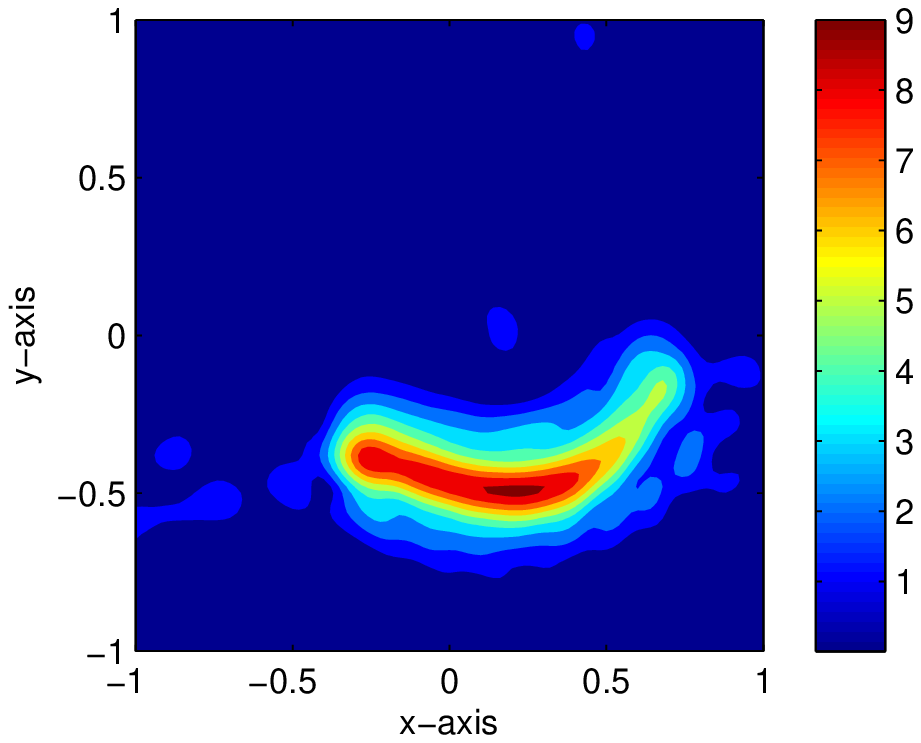}
\includegraphics[width=0.325\textwidth]{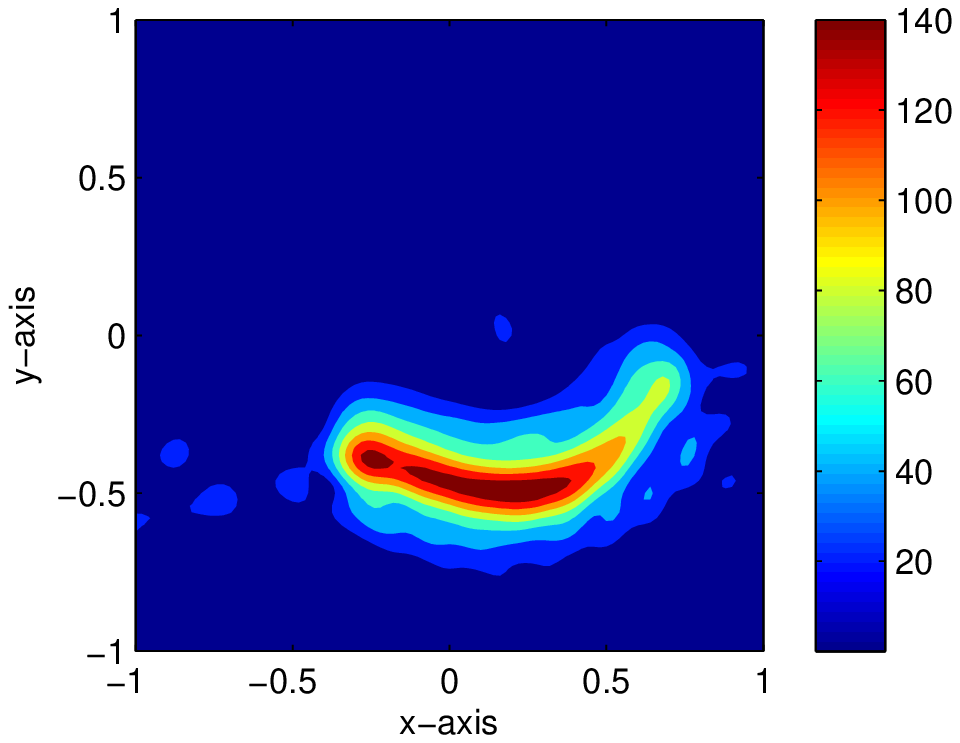}
\includegraphics[width=0.325\textwidth]{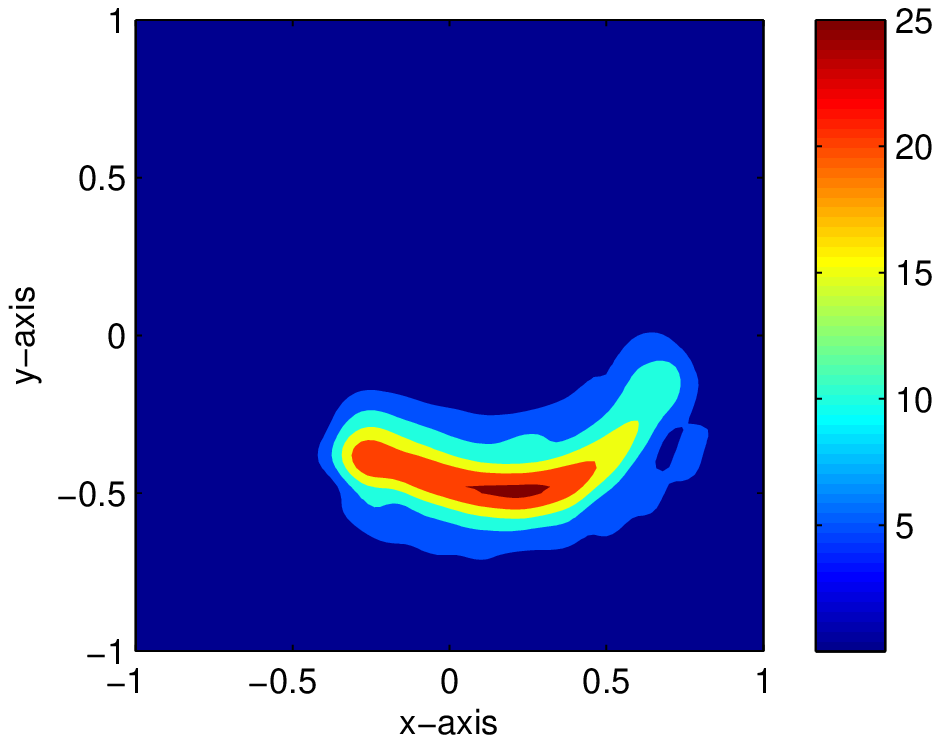}
\caption{\label{Figure2}Same as Figure \ref{Figure1} except the thin inclusion is $\Gamma_2$.}
\end{center}
\end{figure}

Now, let us extend the proposed algorithm for imaging of two-(or more) different thin inclusions $\Gamma_1$ and $\Gamma_2$. For the sake of
simplicity, we denote $\Gamma_{\mathrm{M}}=\Gamma_1\cup\Gamma_2$. Figure \ref{FigureM1} shows maps of $\W_{\mathrm{MF}}(\mz;10)$,
$\W_{\mathrm{WMF}}(\mz;10,1)$, and $\W(\mz;10)$ for $\Gamma_{\mathrm{M}}$ with the same permittivities $\eps_1=\eps_2=5$ and permeabilities
$\mu_1=\mu_2=5$. It is interesting to observe that unlike the case of single inclusion, where almost unexpected artifacts are eliminated,
proposed imaging functional $\W(\mz;F)$ does not improve the traditional ones $\W_{\mathrm{MF}}(\mz;F)$ and $\W_{\mathrm{WMF}}(\mz;F,1)$.
Hence, further analysis is needed to identify the reason.

\begin{figure}[!ht]
\begin{center}
\includegraphics[width=0.325\textwidth]{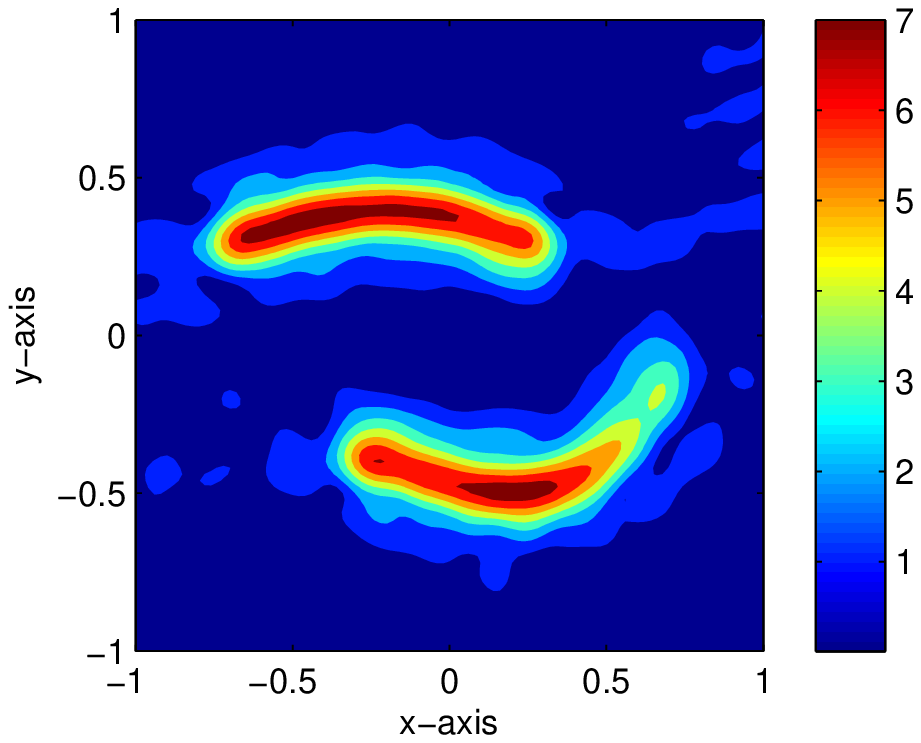}
\includegraphics[width=0.325\textwidth]{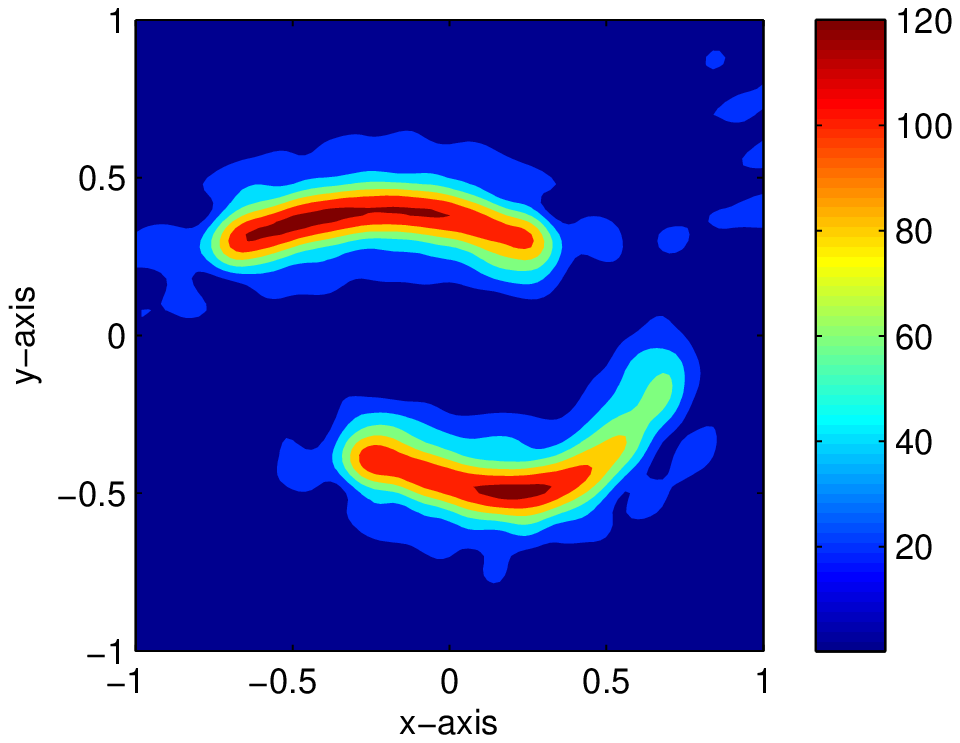}
\includegraphics[width=0.325\textwidth]{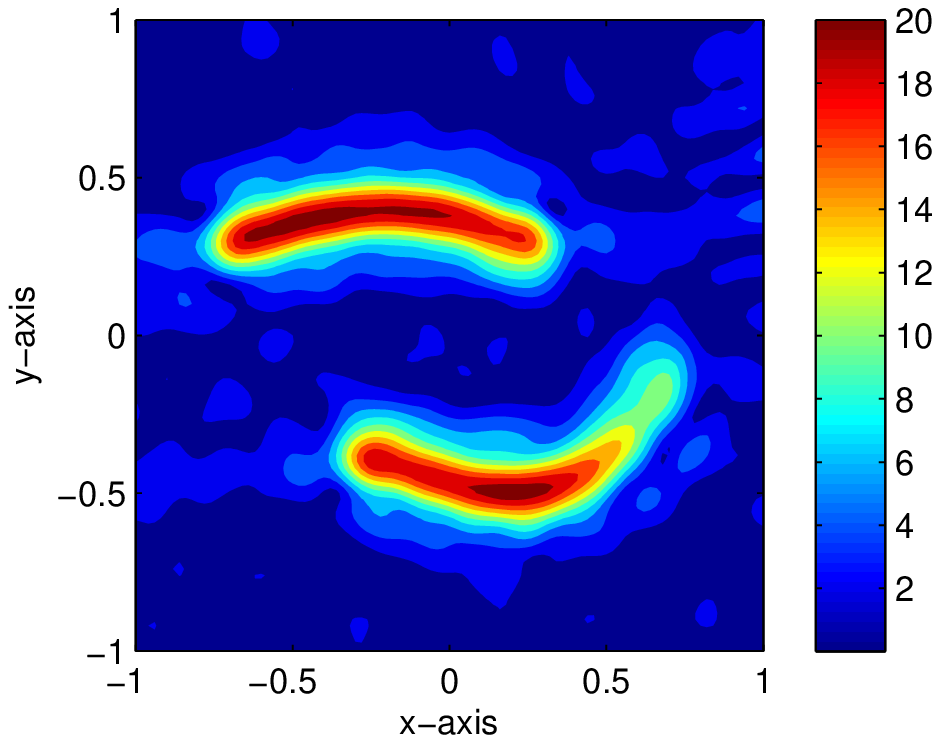}
\caption{\label{FigureM1}Maps of $\W_{\mathrm{MF}}(\mz;10)$ (left), $\W_{\mathrm{WMF}}(\mz;10,1)$ (center), and $\W(\mz;10)$ (right) when
the thin inclusion is $\Gamma_{\mathrm{M}}$ with same material property.}
\end{center}
\end{figure}

For the final example, we consider the imaging of multiple inclusions $\Gamma_{\mathrm{M}}$ with different properties $\eps_1=\mu=5$ and
$\eps_2=\mu_2=10$. Results in numerical simulations are exhibited in Figure \ref{FigureM2}. By comparing the results in Figure \ref{FigureM1}, we can observe that same as the imaging of single inclusion, almost every artifact has disappeared in the map of
$\W(\mz;10)$ while some of they are still remaining in the maps of $\W_{\mathrm{MF}}(\mz;10)$, $\W_{\mathrm{WMF}}(\mz;10,1)$.

\begin{figure}[!ht]
\begin{center}
\includegraphics[width=0.325\textwidth]{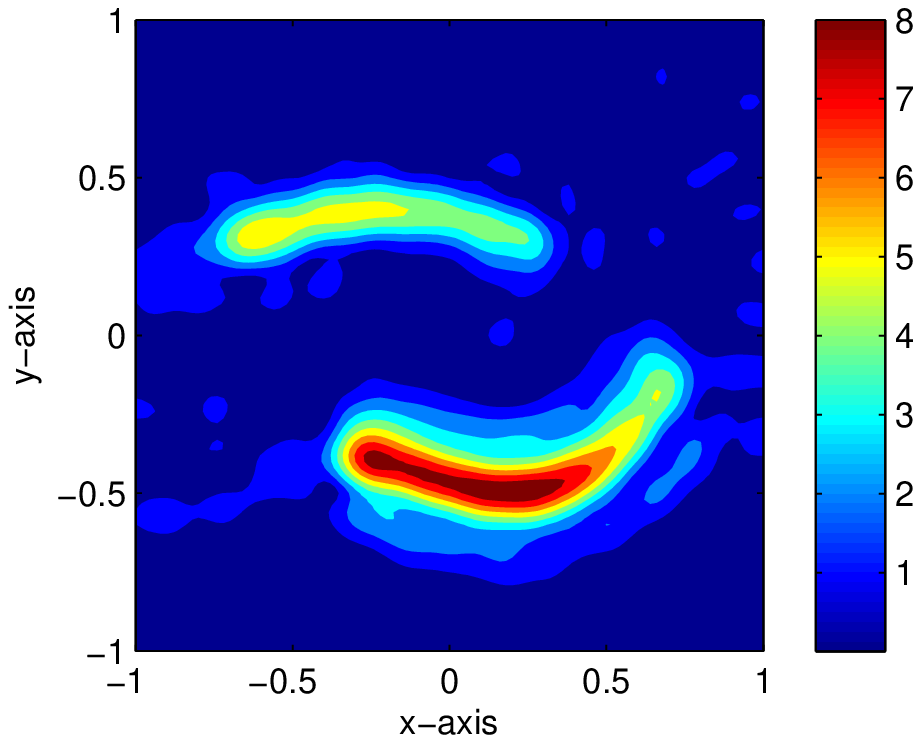}
\includegraphics[width=0.325\textwidth]{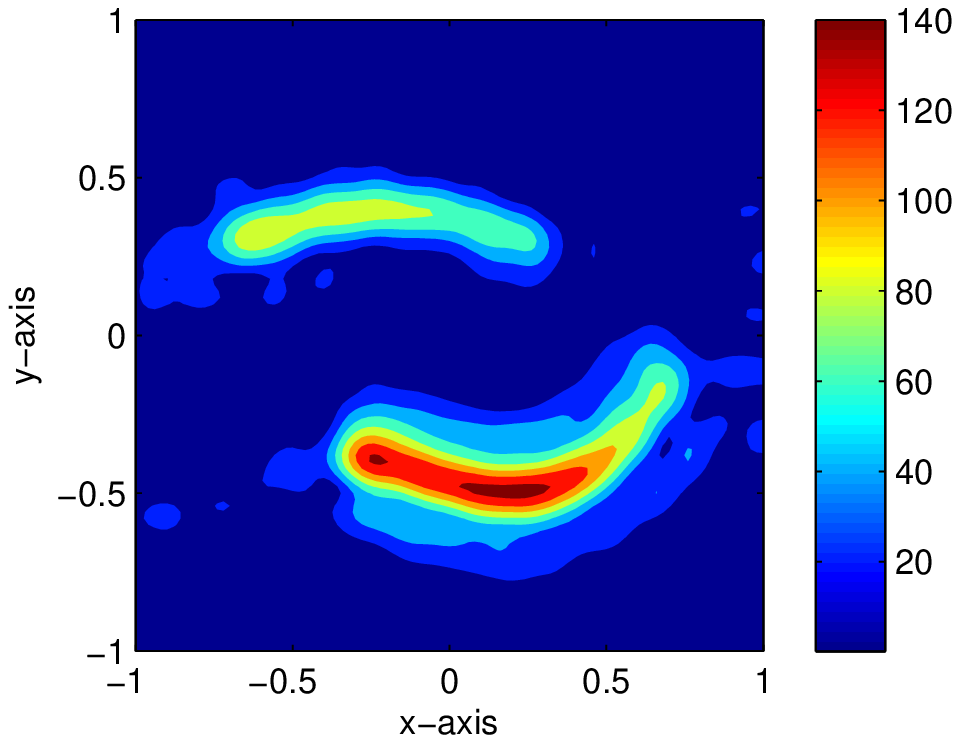}
\includegraphics[width=0.325\textwidth]{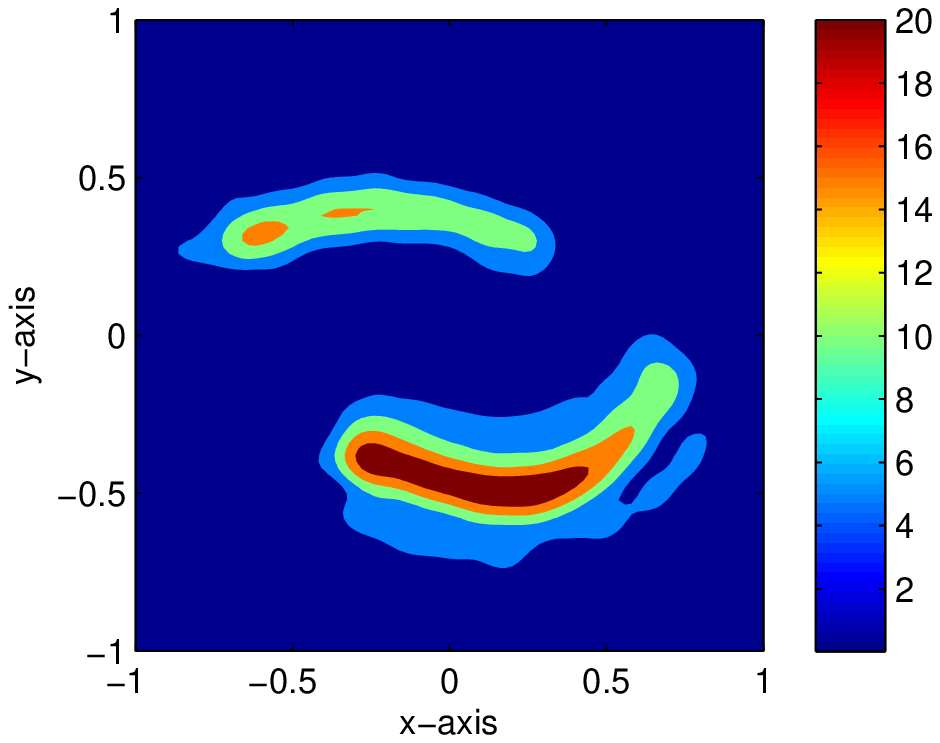}
\caption{\label{FigureM2}Same as Figure \ref{FigureM2} except different material properties.}
\end{center}
\end{figure}

To conclude this section, let us present some remarks. From the derivation of Theorem \ref{StructureWMSM}, it follows that the number $N$ of incident and observation directions and the number $F$ of applied frequencies have to be large enough. Furthermore, the selection of $\mc$ of (\ref{VecW}) must follow the unit normal direction on the supporting curve $\sigma$. On the other hand, we point out that if there are two inclusions with the same material property, our analysis and observation are not valid anymore. In fact, the derivation of asymptotic
expansion formula in the existence of multiple inclusions, one must assume that they are well-separated to each other. Hence, we expect that if the distance between two inclusions are sufficiently large, the result will be nice. But, this fact does not guarantee the improvement of proposed algorithm. Thus, finding a method of improvement is still required. Finally, we believe that although the result in this paper does not guarantee the complex shape of thin inclusions due to the intrinsic Rayleigh resolution limit, they can be good initial guesses of a
level-set method or of a Newton-type reconstruction algorithm, refer to \cite{ADIM,BHR,DL,HSZ,PL4,S} and references therein.

\section{Conclusion}\label{Sec5}
In the present study, we have proposed a multi-frequency subspace migration weighted by the natural logarithmic function for imaging of
thin, crack-like electromagnetic inclusions. This is based on the asymptotic expansion formula of far-field pattern in the existence of such inclusions and the structure of constructed MSR matrix operated at multiple frequencies. Throughout a careful analysis and numerical
experiment, it is confirmed that proposed method successfully improves traditional approaches. However, a counter example was discovered
when one tries to find the shape of multiple inclusions with the same material properties. Hence, investigating the reason will be an
interesting work. Furthermore, for achieving the best imaging of inclusions, finding {\it a priori} information of supporting curve, e.g.,
unit outward normal vector, should be a remarkable research. In this paper, we considered the imaging of inclusions located in the homogeneous space but based some recent works \cite{P3,PL2,PP}, subspace migration can be applicable for imaging of targets buried in the
half-space. Hence, an extension to the half-space problem is expected. And, similarly to \cite{KP}, the improvement considered herein can
be extended to the limited-view inverse scattering problems.

\end{document}